\documentclass[9pt,journal]{IEEEtran}

\usepackage{mathrsfs}
\usepackage{amsfonts}
\usepackage{amsmath,amssymb}
\usepackage{graphicx}
\usepackage{float}
\usepackage[dvips]{epsfig}
\usepackage[dvips]{color}
\usepackage{subfigure}
\usepackage{caption}

\def\dref#1{(\ref{#1})}

\newtheorem{assumption}{Assumption}
\newtheorem{lemma}{Lemma}
\newtheorem{theorem}{Theorem}
\newtheorem{corollary}{Corollary}
\newtheorem{remark}{Remark}

\newtheorem{proof}{Proof}
\newtheorem{algorithm}{Algorithm}

\begin{document}

%
\title{Fully Distributed Event-Triggered Protocols for Linear Multi-Agent Networks}

\author{Bin Cheng and Zhongkui Li,~\IEEEmembership{Member,~IEEE}
\thanks{This work was supported in part by the National Natural Science Foundation of China under grants 61473005, 11332001, U1713223, and by Beijing Nova Program under grant 2018047. Corresponding author: Zhongkui Li.}
\thanks{B. Cheng and Z. Li are with the State Key Laboratory for Turbulence and Complex Systems, Department of Mechanics and Engineering Science, College of Engineering, Peking University, Beijing 100871, China. E-mail: {\tt \{bincheng,zhongkli\}@pku.edu.cn}
}}

\IEEEtitleabstractindextext{%
\begin{abstract}
This paper considers the distributed event-triggered consensus problem for general linear multi-agent networks.
Both the leaderless and leader-follower consensus problems are considered.
Based on the local sampled state or local output information, distributed adaptive event-triggered protocols are designed,
which can ensure that consensus of the agents is achieved and the Zeno behavior is excluded by showing that the interval between any two triggering events is lower bounded by a strictly positive value.
Compared to the previous related works, our main contribution is that the proposed adaptive event-based protocols are fully distributed and scalable, which do not rely on any global information of the network graph and are independent of the network's scale.
In these event-based protocols, continuous communications are not required for either control laws updating or triggering functions monitoring.
\end{abstract}

\begin{IEEEkeywords}
Event-triggered control, multi-agent system, consensus, adaptive control, distributed control, cooperative control.
\end{IEEEkeywords}}

\maketitle

\IEEEdisplaynontitleabstractindextext

\IEEEpeerreviewmaketitle

\section{Introduction}
Cooperative control, having broad applications in various areas including flocking, formation control, and distributed sensor networks \cite{olfatisaber2007consensus,WRen2007information,ZLi2015designing,ZLi2014cooperative,lewis2014cooperative}, relies on the information exchange between neighboring agents and over the network.
The information exchanges among agents are conducted over digital networks consisting of various communication links.
Traditional control strategies require communication process being carried out at each time for continuous-time algorithms or at all iterations for discrete-time algorithms.
However, the bandwidth of the communication network and the power source of the agents are inevitably constrained in many practical systems those have become increasingly networked, wireless, and spatially distributed.
In networked control applications, it makes sense to only transmit
information when certain signal in the system is larger than a threshold value \cite{Henningsson2008sporadic}.
The event-triggered control strategy is developed as an important means for avoiding continuous communications.
Event-based control offers some clear advantages with respect to the traditional methods, such as saving the energy and minimizing the number of control actions, when handling practical constraints, but it also introduces new theoretical and practical issues \cite{Heemels2012an}.
Pioneer works addressing event-based implementations of feedback control laws include \cite{Henningsson2008sporadic,Astrom1999comparison,PTabuada2007event,Heemels2008analysis}.

The event-triggered consensus problem has been widely studied in the past decade.
A core task in the event-triggered consensus problem is to design distributed event-based protocols, consisting of the event-based control laws and the triggering functions.
In \cite{DVDimarogonas2012distributed,Garcia2013decentralised,Meng2013event}, event-triggered and self-triggered consensus algorithms are proposed for single-integrator agents over undirected connected communication topologies.
Decentralized event-triggered consensus algorithms are presented in \cite{Seyboth2013event}  for both single- and double-integrator multi-agent systems.
Event-based consensus problem of multi-agent systems with general linear models was studied in \cite{DYang2016Decentralized,HZhang2014observer,Zhu2014event,Guo2014a,Cheng2016Event,Garcia2014decentralized,TCheng2017event}.
In particular, \cite{DYang2016Decentralized,HZhang2014observer,Zhu2014event,Guo2014a,Garcia2014decentralized} presented several state feedback and observer-based output feedback event-triggered consensus protocols for linear multi-agent networks, and
\cite{Cheng2016Event,TCheng2017event} considered the event-triggered leader-follower consensus problem for multi-agent systems in the presence of one leader with fixed and switching topologies.
It is worth noting that as pointed out in \cite{Seyboth2013event,DYang2016Decentralized,Zhu2014event}, certain global information of the network, in terms of nonzero eigenvalues of the Laplacian matrix associated with the communication graph, is generally required in the aforementioned papers to determine some parameters in either the control laws or the triggering functions.
Therefore, the event-based consensus protocols in the aforementioned works are actually not fully distributed.
To the best of our knowledge, how to design fully distributed event-triggered consensus protocols for general linear multi-agent networks is still open and awaits breakthrough.

In this paper, we devote to designing fully distributed and scalable event-triggered consensus protocols for general linear multi-agent networks. Since the event-based protocols are expected to be scalable, whose design is independent of the scale of the network, the simple method
of estimating a lower bound of the nonzero eigenvalues of the Laplacian matrix using the number of agents, as suggested in \cite{Seyboth2013event,DYang2016Decentralized}, is not applicable.
Note that distributed procedures via local interactions among neighboring agents are proposed in \cite{yang2010decentralized,franceschelli2013decentralized} to estimate the eigenvalues of the Laplacian matrix. Nevertheless, when running these distributed estimation procedures simultaneously with consensus protocols, especially for the case with event-triggered communications, the convergence and feasibility of the overall algorithm remain unclear and questionable.
Therefore, we have to come up with novel perspectives to design fully distributed and scalable event-based protocols.

We consider the leaderless and leader-follower consensus problems for undirected graphs and leader-follower graphs, respectively.
For the case of leaderless consensus, we propose a distributed adaptive event-based protocol, based on the sampled state information of neighboring agents.
One distinct feature of the proposed adaptive event-based protocol is that it includes time-varying weights into both the control laws and the triggering functions.
We show that the proposed protocol can guarantee consensus and is robust with respect to bounded external disturbances. We further rule out the Zeno behavior at any finite time by showing that the interval between two arbitrary triggering instants is lower bounded by a strictly positive value.
When the relative state information of neighboring agents is not accessible, we design a distributed event-triggered observer-based adaptive protocol by using only the local output information.
We also extend to consider the leader-follower consensus problem when a leader exists and its information can be received by at least one follower at the initial time instant.
The adaptive event-triggered protocols in this paper rely on the sampled local information of each agent and from its neighbors and do not need continuous communications in either control laws updating or triggering conditions monitoring.
These event-triggered protocols here can be designed and utilized in a fully distributed fashion, i.e., using only the local information of each agent and its neighbors.
Compared to the existing works, e.g., \cite{Seyboth2013event,DYang2016Decentralized,HZhang2014observer,Zhu2014event,Cheng2016Event,Garcia2014decentralized,TCheng2017event}, the main contribution of this paper is that we propose for the first time fully distributed and scalable adaptive event-based protocols, which do not rely on any global information of the network graph and are independent of the network's scale.

The rest of this paper is organized as follows.
The problem statement and motivations are given in Section \ref{s2}.
Fully distributed adaptive event-based consensus protocols are proposed for leaderless graphs in Section \ref{s3}.
Extensions to the case of leader-follower graphs are given in Section \ref{s-leader}.
Numerical simulations are conducted for illustration in Section \ref{s-simulation}.
Finally, Section \ref{s_con} concludes this paper.

\section{Problem Statement and Motivations}\label{s2}

Consider a group of $N$ identical agents with continuous-time general linear dynamics. The dynamics of the $i$-th agent are described by
\begin{equation}\label{sys}
\begin {aligned}
\dot x_i&=Ax_i+Bu_i,\\
y_i&=Cx_i,~i=1,\cdots,N,
\end{aligned}
\end{equation}
where $x_i\in \mathbf{R}^n$, $u_i\in \mathbf{R}^p$, and $y_i\in\mathbf{R}^q$ are, respectively, the state, the control input, and the output of the $i$-th agent.

The communication topology among the $N$ agents is
represented by a directed graph $\mathcal {G}=(\mathcal {V}, \mathcal
{E})$, where $\mathcal {V}=\{v_1,v_2,\cdots,v_N\}$ is the node set
 and $\mathcal {E}\subseteq\mathcal {V}\times\mathcal
{V}$ is the edge set, in which an edge is represented by a pair of distinct nodes. If $(v_i,v_j)\in\mathcal {E}$, node $v_i$
is called a
neighbor of node $v_j$ and node $v_j$ is called an out-neighbor of node $v_i$.
A directed path from
node $v_{i_1}$ to node $v_{i_l}$ is a sequence of adjacent edges of
the form $(v_{i_k}, v_{i_{k+1}})$, $k=1,\cdots,l-1$.
A directed graph contains a directed spanning tree if
there exists a root node that has directed paths to all other nodes.
A graph is said to be undirected, if $(v_i,v_j)\in \mathcal E$ as long as $(v_j,v_i)\in \mathcal E$.
An undirected graph
is connected if there exists a path between every pair
of distinct nodes, otherwise is disconnected.
For a graph $\mathcal G$, its adjacency matrix, denoted by $\mathcal A=[a_{ij}]\in \mathbf R^{N\times N}$, is defined such that $a_{ii}=0$, $a_{ij}=1$ if $(v_j,v_i)\in \mathcal E$ and $a_{ij}=0$ otherwise.
The Laplacian matrix $\mathcal L=[l_{ij}]\in \mathbf R^{N\times N}$ associated with $\mathcal G$ is defined as $l_{ii}=\sum_{j=1}^N{a_{ij}}$ and $l_{ij}=-a_{ij}$, $i\neq j$.
The degree of
agent $v_i$ is defined as $d_i = l_{ii}$.

\begin{lemma}\cite{ZLi2014cooperative}\label{lemma_eigenvalue}
Zero is an eigenvalue of $\mathcal L$ with $\mathbf{1}$ as a right
eigenvector and all nonzero eigenvalues have positive real parts, where ${\bf 1}$ denotes a column vector with all entries equal to 1.
(i) For a directed graph, zero is a simple eigenvalue of $\mathcal L$ if and only if $\mathcal G$ has a
directed spanning tree.
(ii) For an undirected graph, zero is a simple eigenvalue of $\mathcal L$ if and only if $\mathcal G$ is connected. The smallest nonzero eigenvalue $\lambda_2(\mathcal L)$ of $\mathcal L$ satisfies
$\lambda_2(\mathcal L)=\min_{x\neq 0,{\bf 1}^Tx=0}{\frac{x^T\mathcal Lx}{x^Tx}}$.
\end{lemma}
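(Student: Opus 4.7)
The plan has three parts matching the structure of the lemma. First I would verify the opening claims: that $\mathbf{1}$ is a right eigenvector of $\mathcal{L}$ for the eigenvalue $0$ and that every nonzero eigenvalue has positive real part. The first follows immediately from the definition $l_{ii}=\sum_{j\ne i}a_{ij}$, $l_{ij}=-a_{ij}$, so that each row of $\mathcal{L}$ sums to zero and $\mathcal{L}\mathbf{1}=0$. The second I would obtain from Gershgorin's disc theorem: every eigenvalue lies in the union of closed discs centered at $l_{ii}=d_i\ge 0$ with radius $\sum_{j\ne i}|l_{ij}|=d_i$, so each disc is contained in the closed right half plane and meets the imaginary axis only at the origin.

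For part (i), I would analyze the null space of $\mathcal{L}$ in the directed case. In the ``if'' direction, assume a directed spanning tree rooted at some $v_r$; if $\mathcal{L}x=0$, then the row equation $\sum_{j}a_{ij}(x_i-x_j)=0$ forces $x_i$ to equal the common value of its in-neighbors whenever they agree. Propagating along the directed paths from $v_r$, one concludes that $x$ is a scalar multiple of $\mathbf{1}$, so $0$ is simple. For the converse, I would argue by contrapositive: if no directed spanning tree exists, I would use a permutation that block-upper-triangularizes $\mathcal{L}$ according to the condensation into strongly connected components. The absence of a root then yields at least two ``basal'' components, whose corresponding diagonal blocks each contribute an independent null vector (constant on that component, suitably extended), forcing the multiplicity of $0$ to exceed one.

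For part (ii), the undirected case is shorter because $\mathcal{L}$ is symmetric and factors as $\mathcal{L}=DD^T$ for the oriented incidence matrix $D$, so $\mathcal{L}$ is positive semidefinite with $x^T\mathcal{L}x=\sum_{(v_i,v_j)\in\mathcal{E}}(x_i-x_j)^2$. A vector is annihilated by this quadratic form iff it is constant on each connected component, so the nullity of $\mathcal{L}$ equals the number of components, which gives the ``simple iff connected'' statement. The variational identity for $\lambda_2(\mathcal{L})$ then follows from the Courant--Fischer min-max characterization applied to the symmetric PSD matrix $\mathcal{L}$, since the smallest eigenvalue is $0$ with eigenvector $\mathbf{1}$, and the next one is the minimum Rayleigh quotient on the orthogonal complement $\{x:\mathbf{1}^Tx=0\}$.

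The main obstacle I expect is the ``only if'' direction of (i): translating the combinatorial fact ``no directed spanning tree'' into an algebraic statement about the dimension of $\ker\mathcal{L}$ requires care, because the null vectors arising from distinct basal strongly connected components are not simply indicator vectors but must be constructed so that their values propagate consistently through the upper-triangular blocks. The remaining claims are standard consequences of Gershgorin's theorem, the symmetric positive semidefinite structure of $\mathcal{L}$, and Courant--Fischer.
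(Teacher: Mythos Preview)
The paper does not prove this lemma at all: it is stated with a citation to \cite{ZLi2014cooperative} and used as a known tool, so there is no ``paper's own proof'' to compare against. Your proposal is a correct and standard route to the result. The Gershgorin argument gives exactly that every disc $\{z:|z-d_i|\le d_i\}$ meets the imaginary axis only at the origin, so nonzero eigenvalues lie strictly in the open right half plane; the incidence-matrix factorization $\mathcal{L}=DD^T$ together with Courant--Fischer handles part (ii) cleanly; and the SCC condensation/block-triangularization is the usual way to settle the ``only if'' direction of (i). One small remark on that last point: you do not actually need to construct explicit null vectors that ``propagate through the off-diagonal blocks.'' Once $\mathcal{L}$ is permuted to block-triangular form according to the condensation, the algebraic multiplicity of $0$ is the sum of its multiplicities in the diagonal blocks; the diagonal blocks for basal (source) strongly connected components are themselves Laplacians of strongly connected graphs and contribute one zero eigenvalue each, while the blocks for non-basal components are strictly or irreducibly diagonally dominant and hence nonsingular. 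So two or more basal components immediately force the algebraic multiplicity of $0$ to exceed one, and the obstacle you flagged dissolves.
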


The objective of this paper is to address the event-triggered consensus problem for the agents in \dref{sys}, by ensuring that $\lim_{t\rightarrow \infty}\|x_i-x_j\|=0$, $i,j=1,\cdots,N$, and excluding the Zeno behavior, i.e., there does not exist an infinite number of events within a finite period of time \cite{DVDimarogonas2012distributed,DYang2016Decentralized}.
The crucial task in the event-triggered consensus problem is to design distributed event-based consensus protocols or schemes which consist of the event-based control laws and the triggering functions for the agents.
The control laws rely on the local information, sampled at discrete event time instants.
And the triggering functions determine the event instants, at which time each agent broadcasts its state over the network. Existing event-triggered consensus protocols in, e.g., \cite{DYang2016Decentralized,HZhang2014observer,Cheng2016Event,Garcia2014decentralized,TCheng2017event},
are not truly distributed, requiring the knowledge of global eigenvalue information of the communication graph. This motivates us to remove the limitation in this paper by presenting fully distributed and scalable event-triggered consensus schemes.

\section{Fully Distributed Event-based Protocols For Leaderless Consensus}\label{s3}
In this section, we will design fully distributed event-triggered protocols for leaderless graphs.
The following assumption is needed.
\begin{assumption}\label{ABC}
The pair $(A,B,C)$ in \dref{sys} is stabilizable and detectable, and the graph $\mathcal G$ is undirected and connected.
\end{assumption}

\subsection{State-Based Adaptive Event-Triggered Protocols}\label{s3-1}
In this subsection, we consider the simple case where the relative state information of neighboring agents is available.

Define the state estimate as $\tilde x_i(t)=e^{A(t-t_k^i)}x_i(t_k^i)$, $\forall t\in [t_k^i,t_{k+1}^i)$, where $t_k^i$ denotes the $k$-th event triggering instant of agent $v_i$.
The triggering time instants $t_0^i$, $t_1^i$, $\cdots$, will be determined by the triggering function to be designed later.
For agent $v_i$, we define a measurement error $e_i(t)$ as
\begin{equation}\label{ei}
\begin {aligned}
e_i(t)\triangleq \tilde x_i(t)-x_i(t), i=1,\cdots,N.
\end{aligned}
\end{equation}

Inspired by the adaptive consensus protocols with continuous communications in \cite{ZLi2014cooperative,ZLi2015designing}, we propose a distributed event-based state feedback adaptive control law for each agent as
\begin{equation}\label{pro1}
\begin {aligned}
u_i(t)&=K\sum_{j=1}^N{c_{ij}(t)a_{ij}(\tilde x_i(t)-\tilde x_j(t))},\\
\dot c_{ij}(t)&=
\kappa_{ij}a_{ij}[-\varrho_{ij}c_{ij}(t)+(\tilde x_i(t)-\tilde x_j(t))^T\Gamma(\tilde x_i(t)-\tilde x_j(t))],\\
&\quad\quad\quad\quad~i=1,\cdots,N,
\end{aligned}
\end{equation}
where $c_{ij}(t)$ denotes the time-varying coupling weight for the edge $(v_i,v_j)$ with $c_{ij}(0)=c_{ji}(0)$, $\varrho_{ij}=\varrho_{ji}$ and $\kappa_{ij}=\kappa_{ji}$ are positive constants, and $K\in \mathbf R^{p\times n}$ and $\Gamma \in \mathbf R^{n\times n}$ are the feedback gain matrices.

The triggering function for each agent is given by
\begin{equation}\label{eve}
\begin {aligned}
f_i(t)&=\sum_{j=1}^N(1+\delta c_{ij})a_{ij}e_i^T\Gamma e_i\\
&\quad-\frac{1}{4}\sum_{j=1}^N{a_{ij}(\tilde x_i-\tilde x_j)^T\Gamma(\tilde x_i-\tilde x_j)}-\mu e^{-\nu t},
\end{aligned}
\end{equation}
where $\delta$, $\mu$, and $\nu$ are positive constants.
The event triggering instant is defined as $t_{k+1}^i\triangleq \inf\{t>t_k^i|f_i(t)\geq 0\}$ with $t_0^i=0$.
At the event instant time, agent $v_i$ updates its controller \dref{pro1} using its current state and broadcasts its current state to its neighbors.
Meanwhile, the measurement error $e_i(t)$ is reset to zero.
When the agents receive new states broadcast by any of their neighbors, they will update their controllers immediately.

\begin{remark}
One distinct feature of the adaptive event-triggered protocol in this section is that it includes time-varying weights $c_{ij}(t)$ into both the control law \dref{pro1} and the triggering function \dref{eve}. As a consequence, the triggering function \dref{eve} here is non-quadratic in terms of the measurement error $e_i$, which is different from those in the previous works \cite{Seyboth2013event,DYang2016Decentralized,Zhu2014event}.
The triggering function \dref{eve} consists of a state-dependent term (i.e., the second term on the right-hand side) and a time-dependent term (i.e., the last exponential decay term).  Similarly as in \cite{Nowzari17Event}, we call \dref{eve} a hybrid or mixed triggering function.  Combining the state-dependent and time-dependent terms in \dref{eve}  is expected to be able to rule out the Zeno behavior and meanwhile decrease the event triggering number. 
Note that the event-based protocol, composed of \dref{pro1} and \dref{eve}, is model-based \cite{Garcia2014decentralized}, and relies on the sampled state information of each agent and from its neighbors, rather than agents' real states. Each agent does not need continuously monitor its neighbors' states and therefore neighboring agents do not need continuous communications.
\end{remark}

Denote $\xi=[\xi_1^T,\cdots,\xi_N^T]^T$, where $\xi_i\triangleq x_i-\frac{1}{N}\sum_{j=1}^N{x_j}$.
We can write $\xi$ in a compact form as $\xi=(M\otimes I_n)x$,
where $\otimes$ denotes the Kronecker product, $M=I_N-\frac{1}{N}{\bf{1}\bf{1}}^T$, and $x=[x_1^T,\cdots,x_N^T]^T$.
It is clear that 0 is a simple eigenvalue of $M$ with $\bf 1$ as the corresponding  eigenvector and 1 is the other eigenvalue with multiplicity $N-1$.
It is not difficult to see that $M\mathcal L=\mathcal L=\mathcal LM$.
Then, it follows that $\xi=0$ if and only if $x_1=\cdots=x_N$.
Thus, we can refer to $\xi$ as the consensus error.
Using \dref{pro1} for \dref{sys}, it follows that $\xi$ satisfies the following dynamics:
\begin{equation}\label{xid}
\begin {aligned}
\dot \xi_i&=A\xi_i+BK\sum_{j=1}^N{c_{ij}(t)a_{ij}(\tilde x_i-\tilde x_j)},\\
\dot c_{ij}(t)&=\kappa_{ij}a_{ij}[-\varrho_{ij}c_{ij}(t)+(\tilde x_i-\tilde x_j)^T\Gamma(\tilde x_i-\tilde x_j)].
\end{aligned}
\end{equation}

In what follows, we present an algorithm to construct the event-triggered
adaptive consensus protocol composed of \dref{pro1} and \dref{eve}.

\begin{algorithm}\label{algo1}
Assuming that Assumption \ref{ABC} holds, the event-triggered adaptive consensus protocol consisting of \dref{pro1} and \dref{eve} can be designed according to the following three steps.\\
1) Solve the following algebraic Riccati equation (ARE):
\begin{equation}\label{lyaine}
\begin {aligned}
PA+A^TP-PBB^TP+I=0,
\end{aligned}
\end{equation}
to get a solution $P>0$.\\
2) Choose the feedback matrices $K=-B^TP$ and $\Gamma=PBB^TP$.\\
3) Select $\kappa_{ij}$, $\varrho_{ij}$, $\delta$, $\mu$, and $\nu$ to be any positive constants.
\end{algorithm}

We are now ready to present the main results of this subsection.

\begin{theorem}\label{theorem_1}
Suppose that Assumption \ref{ABC} holds.
Both the consensus error $\xi$ and the coupling gains $c_{ij}$ in \dref{xid} are uniformly ultimately bounded under the event-triggered adaptive protocol constructed by Algorithm \ref{algo1}. If $\varrho_{ij}$ in \dref{pro1} are chosen such that $\varrho_{ij}\kappa_{ij}<1/\lambda_{\max}(P)$, then $\xi$ exponentially converges to a small adjustable bounded set as given in \dref{D2}.
\end{theorem}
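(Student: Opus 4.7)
The plan is to construct a Lyapunov function combining a quadratic form in the consensus error with a term penalising the deviation of the adaptive gains from a sufficiently large constant, and then show that the derivative inherits the decay furnished by the ARE while the extra terms generated by the measurement errors are absorbed using the triggering inequality. Since the symmetry conditions $c_{ij}(0)=c_{ji}(0)$, $\varrho_{ij}=\varrho_{ji}$, $\kappa_{ij}=\kappa_{ji}$ together with the undirected graph structure guarantee $c_{ij}(t)\equiv c_{ji}(t)$, one obtains a symmetric weighted Laplacian $\mathcal L_c(t)$ so that the coupling in \dref{xid} can be rewritten in terms of the consensus error $\xi$ modulo the error $e=[e_1^T,\ldots,e_N^T]^T$. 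A natural candidate is
\begin{equation*}
V(t)=\tfrac{1}{2}\xi^T(I_N\otimes P)\xi+\sum_{i,j}\tfrac{a_{ij}}{4\kappa_{ij}}\bigl(c_{ij}(t)-\alpha\bigr)^2,
\end{equation*}
where $\alpha>0$ will be chosen large enough after the cross terms are inspected.

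The first step is to compute $\dot V$ along \dref{xid}. Using $\tilde x_i-\tilde x_j=(\xi_i-\xi_j)+(e_i-e_j)$, the gain choice $K=-B^TP$, $\Gamma=PBB^TP$ and the ARE \dref{lyaine}, I expect the standard manipulation to produce a term of the form $-\tfrac{1}{2}\xi^T\xi-\alpha\,\xi^T(\mathcal L\otimes\Gamma)\xi$ together with cross terms bilinear in $\xi$ and $e$ and a linear term $-\sum_{i,j}\tfrac{a_{ij}\varrho_{ij}}{2}(c_{ij}-\alpha)c_{ij}$ coming from the leakage in the adaptation law. Completing the square in $\xi_i-\xi_j$ against $e_i-e_j$ (Young's inequality with weight tuned to match the coefficient $1/4$ appearing in \dref{eve}) should bound the cross terms by $\tfrac{1}{4}\sum_{i,j}a_{ij}(\tilde x_i-\tilde x_j)^T\Gamma(\tilde x_i-\tilde x_j)$ plus a term proportional to $\sum_{i,j}(1+\delta c_{ij})a_{ij}e_i^T\Gamma e_i$, with $\delta$ exactly playing the role it does in the triggering function.

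The second step is then to invoke the triggering rule $f_i(t)<0$ (which holds in each inter-event interval by construction of $t_{k+1}^i$) to replace that right-hand side by $\mu e^{-\nu t}$. This cancels the bad error terms and leaves $\dot V\le -\xi^T(I_N\otimes I_n)\xi/2\;+\;\text{leakage terms}\;+\;N\mu e^{-\nu t}$, where the leakage $-\tfrac{\varrho_{ij}}{2}(c_{ij}-\alpha)c_{ij}=-\tfrac{\varrho_{ij}}{2}[(c_{ij}-\alpha/2)^2-\alpha^2/4]$ furnishes a negative-definite piece in $c_{ij}$ plus a constant bias. This already yields uniform ultimate boundedness of both $\xi$ and $c_{ij}$.

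For the sharper exponential statement, the idea is to re-bound $\dot V\le-\gamma V+\varepsilon(t)$ with $\varepsilon(t)$ an exponentially decaying plus constant term. To pass from $\xi^T\xi$ to $V_1=\tfrac{1}{2}\xi^T(I_N\otimes P)\xi$ one uses $\xi^T\xi\ge V_1/\lambda_{\max}(P)$, and to pass from the leakage $-\varrho_{ij}(c_{ij}-\alpha/2)^2$ to the gain component $V_2$ one needs $\varrho_{ij}\ge \tfrac{1}{\kappa_{ij}\lambda_{\max}(P)}$, i.e.\ exactly the hypothesis $\varrho_{ij}\kappa_{ij}<1/\lambda_{\max}(P)$ after setting $\gamma=1/\lambda_{\max}(P)$. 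Gronwall's inequality then gives exponential convergence of $V$ into a set whose size is controlled by $\mu$, $\nu$ and $\alpha$, yielding the claimed set in \dref{D2}. The main obstacle I anticipate is the careful bookkeeping of the cross terms between $\xi$ and $e$ so that the coefficient $1/4$ and the weight $1+\delta c_{ij}$ in the triggering function \dref{eve} match exactly what Young's inequality produces; once that matching is arranged, the triggering condition does the rest.
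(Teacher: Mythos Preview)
Your overall architecture matches the paper's: the same Lyapunov candidate (up to a harmless factor in the $c_{ij}$ part), the same symmetrisation of the coupling, the same sequence of Young inequalities to recast the cross terms so that the left-hand side of the triggering condition \dref{eve} appears, and the same choice of $\alpha\ge\max\{2/\delta,\,4/\lambda_2(\mathcal L)\}$ to close the estimate. The UUB part is fine.

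The gap is in your last paragraph, where you misread the role of the hypothesis $\varrho_{ij}\kappa_{ij}<1/\lambda_{\max}(P)$. You set $\gamma=1/\lambda_{\max}(P)$ and then say that, to absorb the leakage term into $-\gamma V_2$, you ``need $\varrho_{ij}\ge \tfrac{1}{\kappa_{ij}\lambda_{\max}(P)}$, i.e.\ exactly the hypothesis $\varrho_{ij}\kappa_{ij}<1/\lambda_{\max}(P)$.'' These two inequalities are opposite: what your scheme actually requires is $\varrho_{ij}\kappa_{ij}\ge 1/\lambda_{\max}(P)$, which is the \emph{negation} of the hypothesis. Under the stated hypothesis the $c_{ij}$-component is the slow direction, so you cannot take $\gamma=1/\lambda_{\max}(P)$; the best uniform rate is $\theta_2\triangleq\min_{(v_i,v_j)\in\mathcal E}\varrho_{ij}\kappa_{ij}$, and a plain Gronwall bound $\dot V\le-\theta_2 V+\varsigma+\tfrac{\alpha}{2}N\mu e^{-\nu t}$ only yields a bound on $V$ (hence on $\xi$ through $\lambda_{\min}(P)$), not the set \dref{D2}.

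What the paper does instead is exploit the slack created by the hypothesis. Writing
\[
\dot V_1\le -\theta_2 V_1+\tfrac12\theta_2\,\xi^T(I_N\otimes P)\xi-\tfrac12\xi^T\xi+\varsigma+\tfrac{\alpha}{2}N\mu e^{-\nu t},
\]
the assumption $\theta_2<1/\lambda_{\max}(P)$ makes $\rho\triangleq\tfrac12(1-\theta_2\lambda_{\max}(P))>0$, so an \emph{additional} term $-\rho\,\xi^T\xi$ survives:
\[
\dot V_1\le -\theta_2 V_1-\rho\,\xi^T\xi+\varsigma+\tfrac{\alpha}{2}N\mu e^{-\nu t}.
\]
It is this residual $-\rho\,\xi^T\xi$ that lets one conclude $\dot V_1\le-\theta_2 V_1+\tfrac{\alpha}{2}N\mu e^{-\nu t}$ whenever $\|\xi\|^2>\varsigma/\rho$, yielding exactly the set $\mathcal S_2$ in \dref{D2}. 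In short: the hypothesis does not enable a larger decay rate for $V$; it guarantees leftover negativity in the $\xi$-direction, which is how the specific bound on $\|\xi\|$ is obtained. A minor related point: your identity $-(c_{ij}-\alpha)c_{ij}=-[(c_{ij}-\alpha/2)^2-\alpha^2/4]$ centers the quadratic at $\alpha/2$, not at $\alpha$ as in $V$, so it does not pair directly with the gain component of $V$; the paper instead uses Young's inequality to produce $-\tfrac12(c_{ij}-\alpha)^2+\tfrac12\alpha^2$, which matches $V$ at the cost of the constant $\varsigma$.
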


\begin{proof}
Consider the Lyapunov function candidate
\begin{equation}\label{lya1}
\begin {aligned}
V_1=\frac{1}{2}\sum_{i=1}^N{\xi_i^TP\xi_i}+\sum_{i=1}^N\sum_{j=1,j\neq i}^N{\frac{(c_{ij}-\alpha)^2}{8\kappa_{ij}}},
\end{aligned}
\end{equation}
where $\alpha$ is a positive constant to be determined later.
Evidently, $V_1$ is positive definite.
The time derivative of $V_1$ along the trajectory of \dref{xid} is given by
\begin{equation}\label{lya1d1}
\begin {aligned}
\dot V_1
&=\sum_{i=1}^N{\xi_i^TP\dot \xi_i}+\sum_{i=1}^N\sum_{j=1,j\neq i}^N{\frac{c_{ij}-\alpha}{4\kappa_{ij}}\dot c_{ij}}\\
&=\sum_{i=1}^N{\xi_i^TPA\xi_i}+
\sum_{i=1}^N{\xi_i^TPBK\sum_{j=1}^Nc_{ij}a_{ij}(\tilde x_i-\tilde x_j)}\\
&\quad+\sum_{i=1}^N\sum_{j=1,j\neq i}^N{\frac{c_{ij}-\alpha}{4\kappa_{ij}}\dot c_{ij}}.
\end{aligned}
\end{equation}
Since $a_{ij}=a_{ji}$ and $c_{ij}(t)=c_{ji}(t)$, it can be easily verified that
\begin{equation}\label{verify}
\begin {aligned}
&\sum_{i=1}^N{\xi_i^TPBK\sum_{j=1}^Nc_{ij}a_{ij}(\tilde x_i-\tilde x_j)}\\
&\qquad=-\frac{1}{2}\sum_{i=1}^N\sum_{j=1}^Nc_{ij}a_{ij}{(\xi_i-\xi_j)^T\Gamma(\tilde x_i-\tilde x_j)}.
\end{aligned}
\end{equation}
By substituting \dref{verify} into \dref{lya1d1}, we have
\begin{equation}\label{lya1d1'}
\begin {aligned}
\dot V_1
=&\sum_{i=1}^N{\xi_i^TPA\xi_i}-
\frac{1}{2}\sum_{i=1}^N\sum_{j=1}^Nc_{ij}a_{ij}{(\tilde x_i-\tilde x_j)^T\Gamma(\tilde x_i-\tilde x_j)}\\
&+\frac{1}{2}\sum_{i=1}^N\sum_{j=1}^Nc_{ij}a_{ij}{(e_i-e_j)^T\Gamma(\tilde x_i-\tilde x_j)}\\&+\sum_{i=1}^N\sum_{j=1,j\neq i}^N{\frac{c_{ij}-\alpha}{4\kappa_{ij}}\dot c_{ij}},
\end{aligned}
\end{equation}
where we have used the facts that $\xi_i-\xi_j=x_i-x_j$ and $e_i=\tilde x_i-x_i$.
In light of the Young's inequality \cite{DBernstein2009Matrix}, it is not difficult to obtain that
\begin{equation}\label{young1}
\begin {aligned}
&\sum_{i=1}^N\sum_{j=1}^Nc_{ij}a_{ij}{(e_i-e_j)^T\Gamma(\tilde x_i-\tilde x_j)}\\
&\qquad\leq \frac{1}{2}\sum_{i=1}^N\sum_{j=1}^Nc_{ij}a_{ij}{(e_i-e_j)^T\Gamma(e_i-e_j)}\\
&\qquad\quad+\frac{1}{2}\sum_{i=1}^N\sum_{j=1}^Nc_{ij}a_{ij}{(\tilde x_i-\tilde x_j)^T\Gamma(\tilde x_i-\tilde x_j)},
\end{aligned}
\end{equation}
and
\begin{equation}\label{young1'}
\begin {aligned}
&\sum_{i=1}^N\sum_{j=1}^Nc_{ij}a_{ij}{(e_i-e_j)^T\Gamma(e_i-e_j)}\\
&\qquad\leq 2\sum_{i=1}^N\sum_{j=1}^Nc_{ij}a_{ij}{e_i^T\Gamma e_i}+2\sum_{i=1}^N\sum_{j=1}^Nc_{ij}a_{ij}{e_j^T\Gamma e_j}\\
&\qquad=4\sum_{i=1}^N\sum_{j=1}^Nc_{ij}a_{ij}{e_i^T\Gamma e_i}.
\end{aligned}
\end{equation}
Substituting \dref{xid}, \dref{young1}, and \dref{young1'} into \dref{lya1d1'} yields
\begin{equation}\label{lya1d2}
\begin {aligned}
\dot V_1
&\leq \sum_{i=1}^N{\xi_i^TPA\xi_i}-\frac{\alpha}{4}
\sum_{i=1}^N\sum_{j=1}^Na_{ij}{(\tilde x_i-\tilde x_j)^T\Gamma(\tilde x_i-\tilde x_j)}\\
&\quad+\sum_{i=1}^N\sum_{j=1}^Nc_{ij}a_{ij}e_i^T\Gamma e_i
-\sum_{i=1}^N\sum_{j=1}^N\frac{c_{ij}-\alpha}{4}\varrho_{ij}a_{ij}c_{ij}\\
&\leq \sum_{i=1}^N{\xi_i^TPA\xi_i}-\frac{\alpha}{4}
\sum_{i=1}^N\sum_{j=1}^Na_{ij}{(\tilde x_i-\tilde x_j)^T\Gamma(\tilde x_i-\tilde x_j)}\\
&\quad+\sum_{i=1}^N\sum_{j=1}^Nc_{ij}a_{ij}e_i^T\Gamma e_i
+\varsigma-\sum_{i=1}^N\sum_{j=1}^N\frac{\varrho_{ij}a_{ij}}{8}(c_{ij}-\alpha)^2,
\end{aligned}
\end{equation}
where we have used the Young's inequality to get the last inequality and $\varsigma=\sum_{i=1}^N\sum_{j=1}^N\frac{\varrho_{ij}a_{ij}}{8}\alpha^2$.

Note that
\begin{equation}\label{lya1d2'}
\begin {aligned}
&\sum_{i=1}^N\sum_{j=1}^Na_{ij}{(\tilde x_i-\tilde x_j)^T\Gamma(\tilde x_i-\tilde x_j)}\\
&\qquad=\sum_{i=1}^N\sum_{j=1}^Na_{ij}{(\xi_i-\xi_j)^T\Gamma(\xi_i-\xi_j)}\\
&\qquad\quad+\sum_{i=1}^N\sum_{j=1}^Na_{ij}{(e_i-e_j)^T\Gamma(e_i-e_j)}\\
&\qquad\quad+2\sum_{i=1}^N\sum_{j=1}^Na_{ij}{(x_i-x_j)^T\Gamma(e_i-e_j)},
\end{aligned}
\end{equation}
\begin{equation}\label{young2}
\begin {aligned}
&-\sum_{i=1}^N\sum_{j=1}^Na_{ij}{(x_i-x_j)^T\Gamma(e_i-e_j)}\\
&\qquad\leq \frac{1}{4}\sum_{i=1}^N\sum_{j=1}^Na_{ij}{(x_i-x_j)^T\Gamma(x_i-x_j)}\\
&\qquad\quad+\sum_{i=1}^N\sum_{j=1}^Na_{ij}{(e_i-e_j)^T\Gamma(e_i-e_j)},
\end{aligned}
\end{equation}
and
\begin{equation}\label{young2'}
\begin {aligned}
\sum_{i=1}^N\sum_{j=1}^Na_{ij}{(e_i-e_j)^T\Gamma(e_i-e_j)}
\leq 4\sum_{i=1}^N\sum_{j=1}^Na_{ij}e_i^T\Gamma e_i.
\end{aligned}
\end{equation}
Substituting \dref{lya1d2'}, \dref{young2}, and \dref{young2'} into \dref{lya1d2} gives
\begin{equation}\label{lya1d2''}
\begin {aligned}
\dot V_1
&\leq \frac{1}{2}\xi^T\left[I_N\otimes(PA+A^TP)-\frac{\alpha}{4}\mathcal L\otimes \Gamma\right]\xi\\
&\quad+
\frac{\alpha}{2}\sum_{i=1}^N[\sum_{j=1}^N(1+\frac{2}{\delta \alpha}\cdot\delta c_{ij})a_{ij}e_i^T\Gamma e_i\\
&\quad-\frac{1}{4}\sum_{j=1}^Na_{ij}{(\tilde x_i-\tilde x_j)^T\Gamma(\tilde x_i-\tilde x_j)}]\\
&\quad-\sum_{i=1}^N\sum_{j=1}^N\frac{\varrho_{ij}a_{ij}}{8}(c_{ij}-\alpha)^2
+\varsigma.
\end{aligned}
\end{equation}

By the definition of $\xi$, it is easy to see that $({\mathbf{1}}^T\otimes I)\xi=0$.
Because $\mathcal G$ is connected, it then follows from Lemma 1 that $\xi^T(\mathcal L\otimes \Gamma)\xi\geq \lambda_2(\mathcal L)\xi^T(I_N\otimes \Gamma)\xi$, where $\lambda_2(\mathcal L)$ is the smallest nonzero eigenvalue of $\mathcal L$.
By noting the triggering functions \dref{eve} and choosing $\alpha$ to be sufficiently large such that $\alpha\geq \max\{\frac{2}{\delta},\frac{4}{\lambda_2(\mathcal L)}\}$, it follows from \dref{lya1d2''} that
\begin{equation}\label{lya1d3}
\begin {aligned}
\dot V_1
&\leq \frac{1}{2}\xi^T\left[I_N\otimes(PA+A^TP)-\frac{\alpha}{4}\mathcal L\otimes PBB^TP\right]\xi\\
&\quad-\sum_{i=1}^N\sum_{j=1}^N\frac{\varrho_{ij}a_{ij}}{8}(c_{ij}-\alpha)^2
+\varsigma+\frac{\alpha}{2}N\mu e^{-\nu t}\\
&\leq -\frac{1}{2}\xi^T\xi
-\sum_{i=1}^N\sum_{j=1}^N\frac{\varrho_{ij}a_{ij}}{8}(c_{ij}-\alpha)^2+\varsigma+\frac{\alpha}{2}N\mu e^{-\nu t}.
\end{aligned}
\end{equation}
Substituting \dref{lya1} into \dref{lya1d3} yields
\begin{equation*}\label{V1-d4}
\begin {aligned}
\dot V_1
&\leq -\theta_1 V_1+\frac{1}{2}\theta_1\xi^T(I_N\otimes P)\xi -\frac{1}{2}\xi^T\xi\\
&\quad \!+\!\frac{1}{8}\sum_{i=1}^N\sum_{j=1}^N(\frac{\theta_1}{\kappa_{ij}}\!
-\!\varrho_{ij})a_{ij}(c_{ij}-\alpha)^2
+\varsigma+\frac{\alpha}{2}N\mu e^{-\nu t}\\
&\leq -\theta_1 V_1
+\varsigma+\frac{\alpha}{2}N\mu e^{-\nu t},
\end{aligned}
\end{equation*}
where $\theta_1=\mathrm{min}_{(v_i,v_j)\in \mathcal E}\{\varrho_{ij}\kappa_{ij},\frac{1}{\lambda_{\max}(P)}\}$.
According to the Comparison lemma \cite{ZLi2014cooperative}, we have
\begin{equation*}\label{V3-int}
\begin {aligned}
V_1(t)&\leq [V_1(0)-\frac{\varsigma}{\theta_1}]e^{-\theta_1 t}
+\frac{\varsigma}{\theta_1}
+\frac{\alpha}{2}N\psi(t),
\end{aligned}
\end{equation*}
where $\psi(t)$ is defined as
\begin{equation*}\label{psi}
\begin {aligned}
\psi(t)=\begin{cases}\mu te^{-\theta_1 t}&\text{if}~\theta_1=\nu,\\
\frac{\mu}{\theta_1-\nu}(e^{-\nu t}-e^{-\theta_1 t})&\text{if}~\theta_1\neq \nu.
\end{cases}
\end{aligned}
\end{equation*}
It is not difficult to verify that $\lim_{t\rightarrow +\infty}\psi(t)=0$.
Therefore, $V_1$ exponentially converges to a bounded set 
$\mathcal S_1\triangleq \left\{\xi,~c_{ij}~|~V_1\leq \frac{\varsigma}{\theta_1}\right\}.$
In light of the fact that $V_1\geq \frac{\lambda_{\min}(P)}{2}\|\xi\|^2+\sum_{i=1}^N\sum_{j=1}^N\frac{(c_{ij}-\alpha)^2}{8\kappa_{ij}}$,
we conclude that $\xi$ and $c_{ij}$ are all uniformly ultimately bounded.

Since $\theta_2 \triangleq \min_{\forall (v_i,v_j)\in \mathcal E}\{\varrho_{ij}\kappa_{ij}\}<\frac{1}{\lambda_{\max}(P)}$, we can rewrite \dref{lya1d3} as
\begin{equation}\label{V1-d5}
\begin {aligned}
\dot V_1
&\leq -\theta_2 V_1+\frac{1}{2}\theta_2\xi^T(I_N\otimes P)\xi -\frac{1}{2}\xi^T\xi\\
& +\frac{1}{8}\sum_{i=1}^N\sum_{j=1}^N(\frac{\theta_2}{\kappa_{ij}}-\varrho_{ij})a_{ij}(c_{ij}-\alpha)^2
+\varsigma+\frac{\alpha}{2}N\mu e^{-\nu t}\\
&\leq -\theta_2 V_1-\rho\xi^T\xi
+\varsigma+\frac{\alpha}{2}N\mu e^{-\nu t},
\end{aligned}
\end{equation}
where $\rho=\frac{1}{2}(1-\theta_2 \lambda_{\max}(P))$.
Obviously, it follows from \dref{V1-d5} that $\dot V_1\leq -\theta_2 V_1+\frac{\alpha}{2}N\mu e^{-\nu t}$ if $\xi^T\xi> \frac{\varsigma}{\rho}$.
Therefore, we can obtain that the consensus error $\xi$ exponentially converges to the following bounded set:
 \begin{equation}\label{D2}
\begin {aligned}
\mathcal S_2\triangleq \left\{\xi~|~\|\xi\|^2\leq \frac{\varsigma}{\rho}\right\}.
\end{aligned}
\end{equation}
This completes the proof.
$\hfill $$\blacksquare$
\end{proof}

Theorem 1 shows that the consensus error $\xi$ under \dref{pro1} and \dref{eve} converges to a residual set that can be arbitrary small by choosing proper constants $\varrho_{ij}$. The term $-\varrho_{ij}c_{ij}(t)$ in \dref{pro1} is inspired by the $\sigma$-modification technique in the adaptive control literature \cite{ioannou1996robust,ZLi2014cooperative}. When $-\varrho_{ij}c_{ij}(t)$ is removed from \dref{pro1}, in this case the consensus error $\xi$ will asymptotically converge to zero, as stated in the following corollary.

\begin{corollary}
Let $\varrho_{ij}=0$, $\forall (v_i,v_j)\in\mathcal E$ in \dref{pro1}.
Under the conditions as in Theorem \ref{theorem_1}, the consensus error $\xi$ asymptotically converges to zero.
\end{corollary}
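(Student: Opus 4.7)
The strategy is to re-run the Lyapunov calculation of Theorem~\ref{theorem_1} verbatim with $\varrho_{ij}=0$ substituted throughout, and then upgrade the conclusion from ``ultimately bounded'' to ``asymptotically zero'' via an integral argument plus Barbalat's lemma. The same Lyapunov candidate $V_1$ as in \dref{lya1} still works, and all the Young-inequality bounds \dref{young1}--\dref{young2'} are unchanged, so the estimate \dref{lya1d3} remains valid. The only differences are that the adaptive term $-\sum_{ij}\tfrac{\varrho_{ij}a_{ij}}{8}(c_{ij}-\alpha)^2$ drops out and the constant $\varsigma=\sum_{ij}\tfrac{\varrho_{ij}a_{ij}}{8}\alpha^2$ becomes zero. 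Hence \dref{lya1d3} collapses to
\begin{equation*}
\dot V_1 \le -\tfrac{1}{2}\xi^T\xi + \tfrac{\alpha}{2}N\mu e^{-\nu t}.
\end{equation*}

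Next I would integrate this inequality from $0$ to $t$ to obtain
\begin{equation*}
V_1(t) + \tfrac{1}{2}\int_0^t\|\xi(s)\|^2\,ds \;\le\; V_1(0) + \tfrac{\alpha N\mu}{2\nu}\bigl(1-e^{-\nu t}\bigr).
\end{equation*}
Since $V_1\ge 0$, this yields two consequences at once: $V_1$ is uniformly bounded on $[0,\infty)$ (so $\xi$ and all the coupling gains $c_{ij}$ are uniformly bounded by the coercive lower bound on $V_1$ used at the end of the Theorem~\ref{theorem_1} proof), and $\|\xi\|^2\in L^1[0,\infty)$.

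To finish, I would invoke Barbalat's lemma on the function $t\mapsto \|\xi(t)\|^2$. Its integrability is already established, so it remains to show uniform continuity. Using $\dot\xi_i = A\xi_i + BK\sum_j c_{ij}a_{ij}(\tilde x_i-\tilde x_j)$ with $\tilde x_i-\tilde x_j = (\xi_i-\xi_j)+(e_i-e_j)$, uniform boundedness of $\dot\xi$ reduces to boundedness of the measurement errors $e_i$, because $\xi$ and $c_{ij}$ are already bounded. This bound on $e_i$ is delivered by the triggering function \dref{eve} itself: between any two events, $f_i(t)<0$ forces $\sum_j(1+\delta c_{ij})a_{ij}e_i^T\Gamma e_i$ to be dominated by the bounded quantity $\tfrac14\sum_j a_{ij}(\tilde x_i-\tilde x_j)^T\Gamma(\tilde x_i-\tilde x_j)+\mu e^{-\nu t}$, and at every triggering instant $e_i$ is reset to zero. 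Combined with the standard Zeno-free bound asserted elsewhere in the paper, this keeps $e_i$ uniformly bounded, hence $\dot\xi$ bounded, hence $\|\xi\|^2$ uniformly continuous. Barbalat then gives $\|\xi(t)\|\to 0$.

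The routine portion is essentially copying \dref{lya1d3} with $\varrho_{ij}=0$ and integrating; the main obstacle is the uniform-continuity step, since it is the one place where the adaptive term $-\varrho_{ij}c_{ij}$ was previously helpful for controlling the growth of $c_{ij}$. The resolution is that $V_1$ itself remains bounded along trajectories even without $\sigma$-modification (thanks to the exponentially decaying forcing $\tfrac{\alpha N\mu}{2}e^{-\nu t}$), so the gains $c_{ij}$ do not blow up, and the triggering condition does the remaining work of keeping $e_i$ controlled.
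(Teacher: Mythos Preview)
The paper does not supply a separate proof of this corollary; it is stated immediately after Theorem~\ref{theorem_1} as a direct consequence of setting $\varrho_{ij}=0$ in that analysis. Your route---reduce \dref{lya1d3} to $\dot V_1\le -\tfrac12\|\xi\|^2+\tfrac{\alpha N\mu}{2}e^{-\nu t}$, integrate to get $V_1$ bounded and $\xi\in L^2$, then invoke Barbalat---is the natural and standard one in adaptive control, and is in fact what any proof has to do here, since the comparison-lemma argument used for Theorem~\ref{theorem_1} collapses when $\varrho_{ij}=0$ (one gets $\theta_1=\min\{\varrho_{ij}\kappa_{ij},1/\lambda_{\max}(P)\}=0$).

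There is, however, a circularity in your uniform-continuity step. You assert that the triggering rule bounds $e_i^T\Gamma e_i$ by the ``bounded quantity'' $\tfrac14\sum_j a_{ij}(\tilde x_i-\tilde x_j)^T\Gamma(\tilde x_i-\tilde x_j)+\mu e^{-\nu t}$, but this first term itself depends on $e_i-e_j$ through $\tilde x_i-\tilde x_j=(\xi_i-\xi_j)+(e_i-e_j)$; as written you are bounding $e_i^T\Gamma e_i$ by an expression that contains $e_i^T\Gamma e_i$. The Zeno estimate of Theorem~\ref{theorem_1+} does not rescue this either, since the lower bound $\tau_k^i$ there is time-dependent and vanishes as $t\to\infty$. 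One way to close the loop is to sum the triggering inequalities over $i$, expand $(\tilde x_i-\tilde x_j)^T\Gamma(\tilde x_i-\tilde x_j)\le(1+\epsilon)(\xi_i-\xi_j)^T\Gamma(\xi_i-\xi_j)+(1+1/\epsilon)(e_i-e_j)^T\Gamma(e_i-e_j)$, and apply \dref{young2'}. Because $\dot c_{ij}\ge 0$ when $\varrho_{ij}=0$, the weight $(1+\delta c_{ij})\ge 1+\delta c_{ij}(0)$ on the left is strictly larger than~$1$, and for $\epsilon$ large enough it absorbs the $e$-contribution on the right, yielding a genuine uniform bound on $\sum_i d_i e_i^T\Gamma e_i$ in terms of the already-bounded $\xi$. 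Since $\|BKv\|\le\|B\|\sqrt{v^T\Gamma v}$, this is exactly what is needed to make $\dot\xi$ bounded; with that fix your Barbalat argument goes through.
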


One advantage of the event-based adaptive protocol \dref{pro1} including $-\varrho_{ij}c_{ij}(t)$ is that it 
is robust in presence of external disturbances or uncertainties. For instance, the robustness of \dref{pro1} with respect to perturbed agents $\dot x_i=Ax_i+Bu_i+w_i$, where $w_i$ are bounded disturbances, can be shown by following similar steps in the proof of Theorem 1 with a few modifications. The upper bound of the consensus error in this case will depend on both $\varrho_{ij}$ and the upper bounds of $w_i$. The details are skipped here due to the space limitation.

\begin{remark}
It is well known that a necessary and sufficient condition for
the existence of a $P>0$ to the ARE \dref{lyaine} is that $(A,B)$ is stabilizable.
Therefore, a sufficient condition for the existence of the adaptive
protocol \dref{pro1} and \dref{eve} satisfying Algorithm \ref{algo1} and Theorem 1 is that $(A,B)$ is stabilizable.
\end{remark}

\begin{remark}
The final consensus value $\varpi(t)$ reached by the agents can be established. Using \dref{pro1} for \dref{sys}, we obtain that 
$\dot x=(I_N\otimes A)x+(\mathcal L^c\otimes BK)\tilde{x},$
where $\mathcal L^c$ is defined as $\mathcal L_{ii}^c=\sum_{j=1,j\neq i}^N{c_{ij}a_{ij}}$ and $\mathcal L_{ij}^c=-c_{ij}a_{ij}$, $i\neq j$, and $\tilde{x}=[\tilde{x}_1^T,\cdots,\tilde{x}_N^T]^T$. Noting that $\mathcal L^c$ is a weighted symmetric Laplacian matrix of $\mathcal G$, it is not difficult to verify that $({\bf 1}^T\otimes e^{-At})x$ is an invariant quantity. Therefore, $({\bf 1}^T\otimes e^{-At})({\bf 1}\otimes \varpi(t))=({\bf 1}^T\otimes I) x_0$, from which we can derive that  $\varpi(t)=\frac{1}{N}\sum_{i=1}^N e^{At}x_i(0)$.
\end{remark}

\begin{remark}
In Theorem \ref{theorem_1} and Corollary 1, the communication graph is assumed to be fixed throughout the whole process.
Actually, the proposed adaptive consensus protocol is applicable to the case of arbitrary switching communication graphs with a positive
dwelling time, which are connected at each contiguous time interval.
In this case, communications only take place when the triggering conditions are violated or the topology switches.
It is not challenging to prove this assertion by taking $V_1$ in \dref{lya1} as a common Lyapunov function.
\end{remark}

\begin{remark}
Theorem 1 and Corollary 1 show that the agents in \dref{sys} can reach consensus under the proposed event-based adaptive protocol, consisting of the control law \dref{pro1} and the triggering function \dref{eve},  for any connected communication topology.
Contrary to the protocols in the previous works \cite{Seyboth2013event,DYang2016Decentralized,Zhu2014event,Garcia2014decentralized,TCheng2017event}, which require global information of the communication graph in terms of the nonzero eigenvalues of the corresponding Laplacian matrix, the adaptive protocol in the current paper is fully distributed and scalable, relying on none global information of the network graph and independent of the network's scale.
\end{remark}

The following theorem excludes the Zeno behavior.

\begin{theorem}\label{theorem_1+}
Under the conditions in Theorem \ref{theorem_1}, the network \dref{xid} does not exhibit the Zeno behavior and the interval between two consecutive triggering instants for any agent is strictly positive, as illustrated in \dref{tau}. 
\end{theorem}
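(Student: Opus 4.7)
The plan is to show that once agent $v_i$ triggers at time $t_k^i$ and resets $e_i$ to zero, the measurement error $e_i$ needs a strictly positive amount of time to re-grow to the threshold specified by the triggering function \dref{eve}. The strict positivity of the $\mu e^{-\nu t}$ slack term in \dref{eve} together with an upper bound on the growth rate of $\|e_i\|$ will give the desired inter-event lower bound that is positive at every finite time.

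First I would derive the differential equation governing $e_i$ on any inter-event interval $[t_k^i,t_{k+1}^i)$. Since $\tilde x_i(t)=e^{A(t-t_k^i)}x_i(t_k^i)$ satisfies $\dot{\tilde x}_i=A\tilde x_i$, and $\dot x_i=Ax_i+Bu_i$, we get
\begin{equation*}
\dot e_i(t)=Ae_i(t)-Bu_i(t),\qquad e_i(t_k^i)=0.
\end{equation*}
Next I would obtain a uniform upper bound $M_i$ for $\|u_i(t)\|$ on the inter-event interval. From Theorem \ref{theorem_1}, $\xi$ and the coupling weights $c_{ij}$ are uniformly ultimately bounded, so on any finite interval $[0,T]$ the quantities $\tilde x_i-\tilde x_j=(\xi_i-\xi_j)+(e_i-e_j)$ and $c_{ij}$ are all bounded; combining with the a~priori bound on $\|e_i\|$ furnished by the fact that the triggering condition has not yet fired (which forces $\sum_j(1+\delta c_{ij})a_{ij}e_i^T\Gamma e_i$ to remain below the right-hand side of \dref{eve}), this yields the desired $\|u_i(t)\|\le M_i$. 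Integrating the variation-of-constants formula and using $\|e^{As}\|\le e^{\|A\|s}$ then gives
\begin{equation*}
\|e_i(t)\|\le\frac{\|B\|M_i}{\|A\|}\bigl(e^{\|A\|(t-t_k^i)}-1\bigr).
\end{equation*}

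To close the argument, I would use that at the next triggering instant $t_{k+1}^i$ the condition $f_i(t_{k+1}^i)=0$ must hold, so in particular
\begin{equation*}
\lambda_{\max}(\Gamma)\Bigl(\sum_j(1+\delta c_{ij})a_{ij}\Bigr)\|e_i(t_{k+1}^i)\|^2\ge \mu e^{-\nu t_{k+1}^i}.
\end{equation*}
Let $\bar\beta_i$ be an upper bound on $\sum_j(1+\delta c_{ij})a_{ij}$ (obtainable from the ultimate boundedness of $c_{ij}$). Combining the growth bound on $\|e_i\|$ with this lower bound and solving for the inter-event length $\tau_k\triangleq t_{k+1}^i-t_k^i$ yields
\begin{equation}\label{tau}
\tau_k\ge\frac{1}{\|A\|}\ln\!\left(1+\frac{\|A\|}{\|B\|M_i}\sqrt{\frac{\mu e^{-\nu t_{k+1}^i}}{\lambda_{\max}(\Gamma)\bar\beta_i}}\right)>0,
\end{equation}
which is strictly positive at every finite $t_{k+1}^i$, so no accumulation of triggering instants can occur on any bounded time interval.

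The main obstacle is the apparent circularity in bounding $\|u_i(t)\|$: the bound on $u_i$ requires a bound on $\tilde x_j$, which via $\tilde x_j=x_j+e_j$ depends on $e_j$ at neighboring agents, whose growth rate we have not yet controlled. I would resolve this by bootstrapping on the triggering function itself: so long as agent $v_j$ has not yet triggered, \dref{eve} supplies $a_{ij}e_j^T\Gamma e_j\le$ (bounded right-hand side), and the ultimate boundedness of $\xi$ and $c_{ij}$ from Theorem \ref{theorem_1} makes that right-hand side a finite, $t$-dependent but continuous quantity on $[0,T]$, yielding a uniform $M_i$ on any finite horizon. With $M_i$ in hand the rest of the estimate is routine exponential-growth bookkeeping.
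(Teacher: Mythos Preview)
Your overall strategy is correct and matches the paper's: derive the error dynamics $\dot e_i=Ae_i-Bu_i$, bound the growth of $\|e_i\|$ via a comparison with $\dot\varphi=\|A\|\varphi+(\text{const})$, and then invoke only the exponential slack $\mu e^{-\nu t}$ in \dref{eve} to extract a strictly positive lower bound on $t_{k+1}^i-t_k^i$ of the form \dref{tau}. The paper does exactly this.

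However, your resolution of the ``circularity'' in bounding $\|u_i\|$ does not close. You propose to bound $e_j^T\Gamma e_j$ for a neighbor $v_j$ by appealing to agent $v_j$'s own triggering inequality \dref{eve}, and then claim the right-hand side is controlled by ``the ultimate boundedness of $\xi$ and $c_{ij}$.'' But the right-hand side of $f_j\le 0$ is $\tfrac{1}{4}\sum_k a_{jk}(\tilde x_j-\tilde x_k)^T\Gamma(\tilde x_j-\tilde x_k)+\mu e^{-\nu t}$, which again contains $\tilde x$-differences and hence the unknown errors $e_j,e_k$. So the bootstrap recurses through the whole graph and does not terminate in a bound depending only on $\xi$ and $c_{ij}$; a back-of-the-envelope attempt to sum these inequalities over all agents shows you would need something like $\delta c_{ij}>1$ uniformly to absorb the $e$-terms, which is not assumed.

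The paper breaks this circularity differently and more cleanly, without ever invoking the neighbors' triggering functions. It uses the invariant established in Remark~3: since $({\bf 1}^T\otimes e^{-At})x$ is conserved, one has $\sum_j x_j(t)=e^{At}\sum_j x_j(0)$, which is finite for every finite $t$. Combining this with the uniform boundedness of $\xi$ from Theorem~\ref{theorem_1} via $x_i=\xi_i+\tfrac{1}{N}\sum_j x_j$ yields that each $x_i(t)$ is finite on any bounded interval. Consequently $\tilde x_i(t)=e^{A(t-t_k^i)}x_i(t_k^i)$ and $\tilde x_i-\tilde x_j$ are bounded on $[t_k^i,t_{k+1}^i)$ whenever $t_{k+1}^i<\infty$, which gives the constant $\sigma_i$ bounding $\sum_j a_{ij}\|BK\|\|\tilde x_i-\tilde x_j\|$ directly. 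Replace your bootstrap paragraph with this invariant-average argument and the rest of your estimate goes through unchanged.
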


\begin{proof}
For agent $v_i$, consider the evolution of $e_i(t)$ for $t\in [t_k^i,t_{k+1}^i)$, $t_{k+1}^i<\infty$. It follows from \dref{sys}, \dref{ei}, and \dref{pro1} that
\begin{equation*}\label{edot}
\begin {aligned}
\dot{e}_i=Ae_i-\sum_{j=1}^Nc_{ij}a_{ij}BK(\tilde x_i-\tilde x_j).
\end{aligned}
\end{equation*}
The time derivative of $\|e_i(t)\|$ for $t\in [t_k^i,t_{k+1}^i)$ can be then obtained as
\begin{equation}\label{eidotnorm}
\begin {aligned}
\frac{d\|e_i(t)\|}{dt}&=\frac{e_i^T}{\|e_i\|}\dot{e}_i\leq \|\dot e_i\|\\&\leq \|A\|\|e_i\|+\sum_{j=1}^Nc_{ij}a_{ij}\|BK\|\|\tilde x_i-\tilde x_j\|.
\end{aligned}
\end{equation}

As shown in Theorem 1, both $c_{ij}(t)$ and $\xi$ are bounded, the latter of which implies that $x_i(t)-x_j(t)$, $\forall (v_i,v_j)\in\mathcal E$, is bounded.
Without loss of generality, assume that $c_{ij}\leq \bar c$ for some positive constant $\bar c$.
Note that the interval between two consecutive triggering events is bounded. Thus, $e^{A(t-t_k^i)}$ is bounded for any $t\in[t_k^i,t_{k+1}^i)$.
Since $e^{-At}\sum_{j=1}^N{x_j}$ is an invariant quantity (see Remark 3),  we can derive from $ x_i=\xi_i+\frac{1}{N}\sum_{j=1}^N{x_j}$ that $x(t)$ is finite for any finite $t$. Therefore, we can get that 
 for any $t\in[t_k^i,t_{k+1}^i)$, $\tilde x_i-\tilde x_j=e^{A(t-t_k^i)}x_i(t_k^i)-e^{A(t-t_{k'}^j)}x_j(t_{k'}^j)$ is also bounded, where $t_{k'}^j$ denotes the latest event triggering instant of agent $v_j$. 
 Then, it follows from \dref{eidotnorm} that 
\begin{equation}\label{eidotnorm0}
\begin {aligned}
\frac{d\|e_i(t)\|}{dt}&\leq \|A\|\|e_i\|+\bar c \sigma_i,
\end{aligned}
\end{equation}
where $\sigma_i$ denotes the upper bound of $\sum_{j=1}^Na_{ij}\|BK\|\|\tilde x_i-\tilde x_j\|$ for $t$ from $t_k^i$ to $t_{k+1}^i$. Consider a non-negative function $\varphi:[0,\infty)\rightarrow \mathbf R_{\geq 0}$, satisfying
\begin{equation}\label{varphi}
\begin {aligned}
\dot \varphi=\|A\|\varphi+\bar c\sigma_i,~\varphi(0)=\|e_i(t_k^i)\|=0.
\end{aligned}
\end{equation}
Then, we have that $\|e_i(t)\|\leq \varphi(t-t_k^i)$, where $\varphi(t)$ is the analytical solution to \dref{varphi}, given by
$\varphi(t)=\frac{\bar c\sigma_i}{\|A\|}(e^{\|A\|t}-1).$

It is not difficult to see that the triggering function \dref{eve} satisfies $f_i(t)\leq0$, if we have the following condition:
\begin{equation}\label{einorm}
\begin {aligned}
\|e_i\|^2\leq \frac{\mu e^{-\nu t}}{d_i\|K\|^2(1+\delta \bar c)}.
\end{aligned}
\end{equation}
In light of \dref{einorm}, it is easy to see that the interval between two triggering instants $t_k^i$ and $t_{k+1}^i$  for agent $v_i$ can be lower bounded by the time for $\varphi^2(t-t_k^i)$ evolving from $0$ to the right-hand side of \dref{einorm}. Therefore, 
a lower bound $\tau_k^i$ of $t_{k+1}^i-t_k^i$ can be obtained by solving the following inequality:
\begin{equation*}\label{varphi_sol2}
\begin {aligned}
\frac{\bar c^2\sigma_i^2}{\|A\|^2}(e^{\|A\|\tau_k^i}-1)^2&\geq\frac{\mu e^{-\nu (t_k^i+\tau_k^i)}}{d_i\|K\|^2(1+\delta \bar c)}.
\end{aligned}
\end{equation*}
Then, we get that
\begin{equation}\label{tau}
\begin {aligned}
t_{k+1}^i-t_k^i&\geq\tau_k^i\\&\geq\frac{1}{\|A\|}\ln \left(1+\frac{\|A\|}{\bar c\sigma_i\|K\|}\sqrt{\frac{\mu e^{-\nu (t_k^i+\tau_k^i)}}{d_i(1+\delta \bar c)}}\right).
\end{aligned}
\end{equation}
Note that $\tau_k^i$ always exists and is strictly positive for any finite time. 
Since the right-hand side of the second inequality in \dref{tau} approaches zero only when $t\rightarrow\infty$, we get that $t_k^i\rightarrow\infty$ with $k\rightarrow\infty$ and $t\rightarrow\infty$. Therefore, Zeno behavior is excluded for all the agents for any finite time.
$\hfill $$\blacksquare$
\end{proof}

\begin{remark}
It is worth mentioning that the proof of Theorem \ref{theorem_1+} is partly inspired by \cite{Garcia2014decentralized,TCheng2017event,suninpress}.
Note that the lower bound $\tau_k^i$ for the inter-event intervals in \dref{tau} is generally conservative, since it is derived by using only  the exponential decay term  in the triggering function \dref{eve} and ignoring the effect of the state-dependent term.  Nevertheless, an advantage of the state-dependent term in \dref{eve} is that it can significantly reduce the number of event triggering, which can also be verified by numerical simulations.  
Besides,  the lower bound $\tau_k^i$ in \dref{tau} depends on the specific time instants, not uniform with respect to $t$, and approaches zero when $t\rightarrow\infty$. If we replace $\mu e^{-\nu t}$ in \dref{eve} by a small positive constant $\mu$, then the lower bound $\tau_k^i$ for the inter-event intervals satisfies 
$\tau_k^i\geq\frac{1}{\|A\|}\ln (1+\frac{\|A\|}{\bar c\sigma_i\|K\|}\sqrt{\frac{\mu}{d_i(1+\delta \bar c)}})$, which is always positive for any time. The cost is that in this case asymptotical convergence of the consensus error cannot be guaranteed and only practical consensus can be expected. 
\end{remark}

\subsection{Observer-Based Adaptive Event-Triggered Protocols}\label{s4}

In this subsection, we consider the case where only the local output information is available.

We propose for each agent the following adaptive event-based control law:
\begin{equation}\label{pro2}
\begin {aligned}
\dot \chi_i&=A\chi_i+Bu_i+F(C\chi_i-y_i),\\
\dot c_{ij}&=
\kappa_{ij}a_{ij}[-\varrho_{ij}c_{ij}+(\tilde \chi_i-\tilde \chi_j)^T\Gamma(\tilde \chi_i-\tilde \chi_j)],\\
u_i&=K\sum_{j=1}^N{c_{ij}a_{ij}(\tilde \chi_i-\tilde \chi_j)},~i=1,\cdots,N,
\end{aligned}
\end{equation}
where $\chi_i$ is the estimate of the state $x_i$ of agent $v_i$,
$\tilde \chi_i(t)=e^{A(t-t_k^i)}\chi_i(t_k^i)$, $F$, $K$, and $\Gamma$ are feedback gain matrices to be designed,
and the rest of the variables are defined as in \dref{pro1}.

We define the measurement error $e_i(t)\triangleq \tilde \chi_i-\chi_i$, $i=1,\cdots,N$.
The triggering function for each agent $v_i$ is given by
\begin{equation}\label{eve2}
\begin {aligned}
f_i(t)&=\sum_{j=1}^N(1+\delta c_{ij})a_{ij}e_i^T\Gamma e_i\\
&\quad-\frac{1}{4}\sum_{j=1}^N{a_{ij}(\tilde \chi_i-\tilde \chi_j)^T\Gamma(\tilde \chi_i-\tilde \chi_j)}
-\mu e^{-\nu t},
\end{aligned}
\end{equation}
where $\delta$, $\mu$, and $\nu$ are positive constants.

Let $x=[x_1^T,\cdots,x_N^T]^T$, $\chi=[\chi_1^T,\cdots,\chi_N^T]^T$, and $\tilde \chi=[\tilde \chi_1^T,\cdots,\tilde \chi_N^T]^T$.
Denote the consensus error by $\zeta=[\zeta_1^T,\cdots,\zeta_N^T]^T=(M\otimes I_n)x$ and $\eta=[\eta_1^T,\cdots,\eta_N^T]^T=(M\otimes I_n)\chi$, where $M=I_N-\frac{1}{N}{\mathbf{1}\mathbf{1}}^T$, $\zeta_i=x_i-\frac{1}{N}\sum_{j=1}^Nx_j$, and $\eta_i=\chi_i-\frac{1}{N}\sum_{j=1}^N\chi_j$, $i=1,\cdots,N$.
Then, we can get from \dref{sys} and \dref{pro2} that
\begin{equation}\label{xi2}
\begin {aligned}
\dot \zeta&=(I_N\otimes A)\zeta+(\mathcal L^c\otimes BK)\tilde \chi,\\
\dot \eta&=(I_N\otimes A)\eta+(\mathcal L^c\otimes BK)\tilde \chi+(I_N\otimes FC)(\eta-\zeta),
\end{aligned}
\end{equation}
where $\mathcal L_c$ is defined as in Remark 3.

\begin{algorithm}\label{algo2}
Assuming that Assumption \ref{ABC} holds, the observer-based event-triggered adaptive consensus protocol consisting of \dref{pro2} and \dref{eve2} can be designed according to the following steps.\\
1) Choose the feedback matrix $F$ such that $A+FC$ is Hurwitz (One such $F$ can be chosen as $F=-\tilde PC^T$,
where $\tilde P>0$ is the solution to the ARE: $\tilde PA^T+A\tilde P-\tilde PC^TC\tilde P+I=0$).\\
2)-4) The same as steps 1) to 3) in Algorithm \ref{algo1}.
\end{algorithm}

\begin{theorem}\label{theorem_2}
Suppose that Assumption 1 holds.
Both the consensus error $\xi$ and the coupling gains $c_{ij}$ in \dref{pro2} are uniformly ultimately bounded under the event-triggered adaptive protocol constructed by Algorithm \ref{algo2}.
Furthermore, the Zeno behavior can be excluded.
\end{theorem}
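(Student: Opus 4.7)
The strategy is to mimic the proof of Theorem \ref{theorem_1} for the observer estimates, while controlling the observer error separately using the Hurwitz property of $A+FC$, and then to combine both pieces so that boundedness transfers from the estimates back to the true states. The key observation is that if we define the observer error $\epsilon_i \triangleq \chi_i - x_i$, then subtracting the $x_i$-dynamics from the $\chi_i$-dynamics gives $\dot\epsilon_i = (A+FC)\epsilon_i$, i.e.\ the observer errors evolve autonomously and decay exponentially. Moreover, directly from the definitions, $\zeta - \eta = -(M\otimes I_n)\epsilon$, so bounding $\eta$ and $\epsilon$ is enough to bound $\zeta$.

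The plan is to use a composite Lyapunov function of the form
\begin{equation*}
V_2 = \tfrac{1}{2}\sum_{i=1}^N \eta_i^T P \eta_i + \sum_{i=1}^N\sum_{j=1,j\neq i}^N \frac{(c_{ij}-\alpha)^2}{8\kappa_{ij}} + \beta \sum_{i=1}^N \epsilon_i^T Q \epsilon_i,
\end{equation*}
where $P$ solves \dref{lyaine}, $Q>0$ satisfies $(A+FC)^T Q + Q(A+FC) = -I$ (which exists since $F$ is chosen so that $A+FC$ is Hurwitz in Algorithm \ref{algo2}), and $\alpha,\beta$ are positive constants to be chosen. For the $\eta$-terms, the same chain of Young inequalities and manipulations used in Theorem \ref{theorem_1} applies verbatim, with $\tilde x_i$ replaced by $\tilde\chi_i$, $\xi_i$ by $\eta_i$, and the triggering function \dref{eve2} absorbing the error terms involving $e_i = \tilde\chi_i - \chi_i$. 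The only new ingredient is the cross term coming from $(I_N\otimes FC)(\eta-\zeta) = (I_N\otimes FC)(M\otimes I_n)\epsilon$, which is linear in $\epsilon$; by Young's inequality this contributes a bounded multiple of $\|\eta\|^2$ (absorbed by the negative-definite part produced by $-\frac{1}{2}\xi^T\xi$-type inequality in \dref{lya1d3}) plus a multiple of $\|\epsilon\|^2$ (absorbed by choosing $\beta$ large enough so that the $-\beta\|\epsilon\|^2$ term from the observer ARE dominates).

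Following this bookkeeping one obtains an inequality of the form $\dot V_2 \leq -\theta V_2 + \varsigma + c\,e^{-\nu t}$ for some $\theta, c > 0$, and hence by the Comparison lemma $V_2$ exponentially converges to an adjustable bounded set. This yields UUB of $\eta$, $c_{ij}$, and $\epsilon$, from which UUB of $\zeta = \eta + (M\otimes I_n)\epsilon$ follows immediately; the refined exponential convergence of $\zeta$ to a small set can be stated under the same extra condition $\varrho_{ij}\kappa_{ij}<1/\lambda_{\max}(P)$ as in Theorem \ref{theorem_1}.

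For the Zeno part, the argument of Theorem \ref{theorem_1+} is essentially repeated with $\chi_i$ in place of $x_i$. Using $\dot e_i = A e_i - \sum_j c_{ij} a_{ij} BK(\tilde\chi_i - \tilde\chi_j) - F(C\chi_i - y_i)$, the extra output-injection term $F(C\chi_i - y_i) = FC\epsilon_i$ is bounded by UUB of $\epsilon_i$, so we still get a differential inequality $\tfrac{d}{dt}\|e_i\| \leq \|A\|\|e_i\| + \bar\sigma_i$ with a new uniform constant $\bar\sigma_i$. Comparing against the threshold in \dref{eve2} as in \dref{einorm}--\dref{tau} then yields a strictly positive lower bound on $t_{k+1}^i - t_k^i$ for any finite time, ruling out Zeno. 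The main obstacle I expect is the careful selection of $\alpha$ and $\beta$ so that the cross term from $(I_N\otimes FC)(\eta-\zeta)$ is dominated without destroying the negative-definiteness of the $\eta$-part; once that book-keeping is done, the rest is a direct transcription of the state-feedback proofs.
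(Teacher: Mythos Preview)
Your proposal is correct and follows essentially the same route as the paper: a composite Lyapunov function combining the $V_1$-analogue for $\eta$ with a quadratic in the observer error, the cross term $(I_N\otimes FC)(\eta-\zeta)$ absorbed via Young's inequality by picking the observer-error weight large enough, and the Zeno analysis repeated with the bounded extra $FC\epsilon_i$ term. The only cosmetic difference is that the paper works with the projected observer error $\varepsilon=\eta-\zeta$ (which satisfies the same $(A+FC)$ dynamics) rather than your unprojected $\epsilon=\chi-x$, and fixes the weight explicitly as $\|PFC\|^2/(\omega\lambda_{\min}(Q))$ instead of leaving it as a free $\beta$.
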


\begin{proof}
Let $\varepsilon=\eta-\zeta$.
Then, \dref{xi2} can be rewritten in terms of $\varepsilon$ and $\eta$ as
\begin{equation}\label{vare}
\begin {aligned}
\dot \varepsilon&=[I_N\otimes (A+FC)]\varepsilon,\\
\dot \eta&=(I_N\otimes A)\eta+(\mathcal L^c\otimes BK)\tilde \chi+(I_N\otimes FC)\varepsilon.
\end{aligned}
\end{equation}
Evidently, consensus is achieved if $\varepsilon$ and $\eta$ asymptotically converge to zero.

Since $A+FC$ is Hurwitz, it is well known that there exists a $\tilde Q>0$ such that $\tilde Q(A+FC)+(A+FC)^T\tilde Q=\omega I$, where $\omega$ is a positive constant.
Let
\begin{equation}\label{lya21}
\begin {aligned}
V_{21}=\varepsilon^T(I_N\otimes \tilde Q)\varepsilon.
\end{aligned}
\end{equation}
The time derivative of $V_{21}$ along the trajectory of \dref{vare} is given by
\begin{equation}\label{lya21d1}
\begin {aligned}
\dot V_{21}
&=\varepsilon^T\{I_N\otimes [\tilde Q(A+FC)+(A+FC)^T\tilde Q]\}\varepsilon
=-\omega \varepsilon^T\varepsilon.
\end{aligned}
\end{equation}
It is easy to see from \dref{lya21d1} that $\dot V_{21}<0$, implying that $\varepsilon(t)\rightarrow 0$ as $t\rightarrow \infty$.

Let
\begin{equation}\label{lya22}
\begin {aligned}
V_{22}=\frac{1}{2}\eta^T(I_N\otimes P)\eta+\sum_{i=1}^N\sum_{j=1,j\neq i}^N{\frac{(c_{ij}-\alpha)^2}{8\kappa_{ij}}},
\end{aligned}
\end{equation}
where $\alpha$ is a positive constant to be determined later.
The time derivative of $V_{22}$ along the trajectory of \dref{vare} is given by
\begin{equation}\label{lya22d1}
\begin {aligned}
\dot V_{22}
&=\eta^T(I_N\otimes PA)\eta+\eta^T(\mathcal L^c\otimes PBK)\tilde \chi\\
&\quad+\eta^T(I_N\otimes PFC)\varepsilon+\sum_{i=1}^N\sum_{j=1,j\neq i}^N{\frac{c_{ij}-\alpha}{4\kappa_{ij}}\dot c_{ij}}.
\end{aligned}
\end{equation}
Using the Young's inequality gives
\begin{equation}\label{fangsuo1}
\begin {aligned}
\eta^T(I_N\otimes PFC)\varepsilon
&\leq \frac{1}{4}\eta^T(I_N\otimes Q)\eta+\frac{\|PFC\|^2}{\lambda_{\min}(Q)}\varepsilon^T \varepsilon.
\end{aligned}
\end{equation}

Consider the following Lyapunov function candidate
\begin{equation}\label{lya2}
\begin {aligned}
V_2=\frac{\|PFC\|^2}{\omega\lambda_{\min}(Q)}V_{21}+V_{22}.
\end{aligned}
\end{equation}
Evidently, $V_2$ is positive definite.
By using \dref{lya21d1}, \dref{lya22d1}, and \dref{fangsuo1}, we can obtain the time derivative of $V_2$ as
\begin{equation*}\label{lya2d1}
\begin {aligned}
\dot V_2&\leq \eta^T(I_N\otimes PA)\eta+\eta^T(\mathcal L^c\otimes PBK)\tilde \chi+\frac{1}{4}\eta^T(I_N\otimes Q)\eta\\
&\quad+\sum_{i=1}^N\sum_{j=1,j\neq i}^N{\frac{c_{ij}-\alpha}{4\kappa_{ij}}\dot c_{ij}}\\
&\leq \frac{1}{2}\eta^T[I_N\otimes (PA+A^TP+\frac{1}{2}Q)]\eta+\eta^T(\mathcal L^c\otimes PBK)\tilde \chi\\
&\quad+\sum_{i=1}^N\sum_{j=1,j\neq i}^N{\frac{c_{ij}-\alpha}{4\kappa_{ij}}\dot c_{ij}}.
\end{aligned}
\end{equation*}
Following similar steps as in the proof of Theorem 1, by using the triggering functions \dref{eve2} and choosing $\alpha$ sufficiently large such that $\alpha\geq \max\{\frac{2}{\delta},\frac{4}{\lambda_2(\mathcal L)}\}$, we can obtain that
\begin{equation*}\label{lya2d2}
\begin {aligned}
\dot V_2&\leq \frac{1}{2}\eta^T[I_N\otimes (PA+A^TP+\frac{1}{2}Q)-\frac{\alpha}{4}\mathcal L\otimes \Gamma]\eta
+\varsigma+\frac{\alpha}{2}N\mu e^{-\nu t}\\
&\leq-\frac{1}{4}\lambda_{\min}(Q)\eta^T\eta+\varsigma+\frac{\alpha}{2}N\mu e^{-\nu t},
\end{aligned}
\end{equation*}
where $\varsigma=\sum_{i=1}^N\sum_{j=1}^N\frac{\varrho_{ij}a_{ij}}{8}\alpha^2$.

Similarly as in the proof of Theorem \ref{theorem_1}, we can show that $\eta(t)$ is uniformly ultimately bounded.
Because $\varepsilon(t)\rightarrow 0$ and $\eta(t)\rightarrow 0$ as $t\rightarrow \infty$, it follows that the consensus error $\zeta(t)$ is also uniformly ultimately bounded.

The Zeno behavior can be excluded similarly as in Theorem \ref{theorem_1+}. The details are omitted here for brevity.
$\hfill $$\blacksquare$
\end{proof}

\begin{remark}
Theorem \ref{theorem_2} shows that, under the observer-based adaptive protocols \dref{pro2} and the triggering functions \dref{eve2}, the consensus error $\zeta(t)$ is uniformly ultimately
bounded.
A sufficient condition for the existence of \dref{pro2} and \dref{eve2} satisfying Algorithm \ref{algo2} and Theorem \ref{theorem_2} is that the pair $(A,B,C)$ is stabilizable and detectable.
\end{remark}

\section{Fully Distributed Event-based Protocols For Leader-Follower Consensus}\label{s-leader}

In this section, we extend to consider the event-triggered consensus problem in the presence of one leader.
Without loss of generality, assume that the agent indexed by $v_1$ is the leader whose control input $u_1$ is supposed to be zero.
The communication graph $\mathcal G$ among the agents is assumed to satisfy the following assumption.

\begin{assumption}\label{assumption_2}
The pair $(A,B)$ in \dref{sys} is stabilizable.
The subgraph associated with the followers is undirected and the graph $\mathcal G$ contains a directed spanning tree with the leader as the root.
\end{assumption}

Because the leader has no neighbors, the Laplacian matrix $\mathcal L$ can be partitioned as
$
\mathcal L=
\left[\begin{smallmatrix}
     0   &0_{1\times (N-1)}\\
     \mathcal L_2 & \mathcal L_1
\end{smallmatrix}\right],
$
where $\mathcal L_1\in \mathbf{R}^{(N-1)\times (N-1)}$ is symmetric and $\mathcal L_2\in \mathbf{R}^{(N-1)\times 1}$.
In light of Lemma \ref{lemma_eigenvalue}, $\mathcal L_1$ is positive definite.

In the following, we consider only the case where the relative state information is available.
For each follower, we propose the following adaptive event-based control law:
\begin{equation}\label{pro3}
\begin {aligned}
u_i&=K\sum_{j=1}^N{c_{ij}a_{ij}(\tilde x_i-\tilde x_j)},~i=2,\cdots,N,\\
\dot c_{ij}&=
\kappa_{ij}a_{ij}[-\varrho_{ij}c_{ij}+(\tilde x_i-\tilde x_j)^T\Gamma(\tilde x_i-\tilde x_j)],
\end{aligned}
\end{equation}
where $\tilde x_1(t)=e^{At}x_1(0)$, $\tilde x_i(t)=e^{A(t-t_k^i)}x_i(t_k^i)$, $i=2,\cdots,N$, $c_{ij}(t)$ denotes the time-varying coupling weight for the edge $(v_i,v_j)$ with $c_{ij}(0)=c_{ji}(0)$ for $i=2,\cdots,N$, $j=1,\cdots,N$, $\kappa_{ij}=\kappa_{ji}$ are positive constants, and $K\in \mathbf R^{p\times n}$ and $\Gamma \in \mathbf R^{n\times n}$ are the feedback gain matrices.

The triggering function for each follower $v_i$ is designed as
\begin{equation}\label{eve3}
\begin {aligned}
f_i(t)&=\frac{1}{2}(1+\delta c_{i1})a_{i1}e_i^T\Gamma e_i
+\sum_{j=2}^N(1+\delta c_{ij})a_{ij}e_i^T\Gamma e_i\\
&\quad-\frac{1}{4}\sum_{j=1}^Na_{ij}(\tilde x_i-\tilde x_j)^T\Gamma (\tilde x_i-\tilde x_j)-\mu e^{-\nu t}\\
&\quad-\frac{1}{2}a_{i1}(\tilde x_i-\tilde x_1)^T\Gamma (\tilde x_i-\tilde x_1),~i=2,\cdots,N,
\end{aligned}
\end{equation}
where $e_i=\tilde x_i-x_i$, $i=2,\cdots,N$, and $\delta$, $\mu$, and $\nu$ are positive constants.

Define the consensus error $z_i=x_i-x_1$, $i=2,\cdots,N$.
Let $\tilde z_i=\tilde x_i-\tilde x_1$.
By noting that $\dot x_1=Ax_1$, it is not difficult to see that $\tilde x_1(t)\equiv x_1(t)$.
Then, we can get from \dref{sys} and \dref{pro3} that
\begin{equation}\label{vid}
\begin {aligned}
\dot z_i&=Az_i+BK\sum_{j=2}^N{c_{ij}a_{ij}(\tilde z_i-\tilde z_j)}+BKc_{i1}a_{i1}\tilde z_i,\\
\dot c_{i1}&=\kappa_{i1}a_{i1}[-\varrho_{i1}c_{i1}+\tilde z_i^T\Gamma \tilde z_i],\\
\dot c_{ij}&=
\kappa_{ij}a_{ij}[-\varrho_{ij}c_{ij}+(\tilde z_i-\tilde z_j)^T\Gamma(\tilde z_i-\tilde z_j)].
\end{aligned}
\end{equation}

\begin{theorem}
Suppose that Assumption 2 holds.
Choose $K=-B^TP$ and $\Gamma=PBB^TP$, where $P>0$ is defined as in \dref{lyaine}.
Both the consensus error $\xi$ and the coupling gains $c_{ij}$ in \dref{pro3} are uniformly ultimately bounded under the event-triggered adaptive protocol \dref{pro3} and \dref{eve3}.
Furthermore, the closed-loop system does not exhibit the Zeno behavior.
\end{theorem}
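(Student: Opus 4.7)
The plan is to mirror the argument used for Theorem \ref{theorem_1}, with care taken for the asymmetry introduced by the leader and its non-triggered dynamics. The natural candidate Lyapunov function is
\begin{equation*}
V = \tfrac{1}{2}\sum_{i=2}^N z_i^T P z_i + \sum_{i=2}^N\sum_{j=1,j\neq i}^N \frac{(c_{ij}-\alpha)^2}{8\kappa_{ij}},
\end{equation*}
for a positive constant $\alpha$ to be fixed later. First I would differentiate $V$ along \dref{vid}, splitting the $BK$-coupling terms into the follower-follower sum (indices $j\geq 2$) and the single leader contribution (index $j=1$). The follower-follower part is symmetric in $(i,j)$ because $a_{ij}=a_{ji}$ and $c_{ij}=c_{ji}$ for $i,j\geq 2$, so it can be rewritten as a half sum in exactly the same form as the identity \dref{verify} used in Theorem \ref{theorem_1}. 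The leader term is treated separately: since $\tilde x_1\equiv x_1$ we have $e_1=0$, and splitting $z_i=\tilde z_i-e_i$ in the cross term yields a clean $-c_{i1}a_{i1}\tilde z_i^T\Gamma\tilde z_i$ plus a mixed term involving only $e_i$.

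Next I would apply Young's inequality to every cross term containing $e_i$, both in the follower-follower part (yielding, via the same manipulations as in \dref{young1} and \dref{young1'}, a $4c_{ij}a_{ij}e_i^T\Gamma e_i$ upper bound) and in the single leader-edge part (yielding a $2c_{i1}a_{i1}e_i^T\Gamma e_i$ upper bound). The $\sigma$-modification term in $\dot c_{ij}$ is then completed into a quadratic in $(c_{ij}-\alpha)$ exactly as in \dref{lya1d2}, producing a constant residual $\varsigma$ plus a negative $(c_{ij}-\alpha)^2$ term. Grouping the resulting expression so that the factor $\alpha/2$ multiplies the triggering functional \dref{eve3} in each index $i$ (note the coefficient $\tfrac{1}{2}$ in front of the $a_{i1}$ pieces of \dref{eve3} is precisely what matches the asymmetric leader bound) lets me invoke $f_i(t)\leq 0$ and replace the error-plus-state block by the decaying term $\mu e^{-\nu t}$.

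After the substitution, what survives in the state-dependent part is $\tfrac{1}{2}z^T[I_{N-1}\otimes(PA+A^TP) - \tfrac{\alpha}{4}\mathcal L_1\otimes PBB^TP]z$, where $\mathcal L_1$ is the follower block of $\mathcal L$. Here the key difference from Theorem \ref{theorem_1} is that Assumption \ref{assumption_2} and Lemma \ref{lemma_eigenvalue} imply $\mathcal L_1>0$, so no $\mathbf 1^T z=0$ side condition is needed; picking $\alpha\geq\max\{2/\delta,\,4/\lambda_{\min}(\mathcal L_1)\}$ and using the ARE \dref{lyaine} makes this block $\leq -\tfrac{1}{2}z^Tz$. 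The remaining estimate $\dot V \leq -\theta V + \varsigma+\tfrac{\alpha}{2}N\mu e^{-\nu t}$ and the Comparison Lemma then give uniform ultimate boundedness of both $z$ and $c_{ij}$, following the same lines as in Theorem \ref{theorem_1}.

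For the exclusion of Zeno behavior I would recycle the argument of Theorem \ref{theorem_1+}. The dynamics of $e_i$ for $i\geq 2$ still satisfies a linear bound $\frac{d\|e_i\|}{dt}\leq \|A\|\|e_i\|+\bar c\sigma_i$ since the boundedness of $z$ and $c_{ij}$ implies the boundedness of the neighbor terms (including the leader edge, because $\tilde x_1(t)=e^{At}x_1(0)$ grows only with $x(t)$, which is finite on finite intervals). A sufficient condition for $f_i(t)\leq 0$ analogous to \dref{einorm} can be read off from \dref{eve3}, and solving the resulting inequality gives a strictly positive lower bound on $t_{k+1}^i-t_k^i$ of the same logarithmic form as \dref{tau}. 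The main delicate point, and the only step that really differs from the leaderless proof, is the bookkeeping of the asymmetric leader-edge contributions and the verification that the $\tfrac{1}{2}$ coefficients in \dref{eve3} are exactly those needed to absorb the leader-related cross terms after Young's inequality; once this is correctly set up the rest proceeds in close analogy with Theorems \ref{theorem_1} and \ref{theorem_1+}.
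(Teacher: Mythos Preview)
Your plan is essentially the paper's own proof: a Lyapunov function combining $\tfrac12 z^T(I\otimes P)z$ with quadratic penalties on $(c_{ij}-\alpha)$, separate treatment of the leader edge, Young's inequalities, invoking $f_i\le 0$, positive definiteness of $\mathcal L_1$, and then the Zeno argument of Theorem~\ref{theorem_1+}. All of this matches the paper.

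There is one concrete discrepancy worth fixing. The paper's Lyapunov function is
\[
V_3=\tfrac12\sum_{i=2}^N z_i^TPz_i+\sum_{i=2}^N\sum_{j=2,j\neq i}^N\frac{(c_{ij}-\beta)^2}{8\kappa_{ij}}+\sum_{i=2}^N\frac{(c_{i1}-\beta)^2}{4\kappa_{i1}},
\]
i.e.\ the leader-edge weights carry $1/(4\kappa_{i1})$, not the uniform $1/(8\kappa_{ij})$ you wrote. The reason is that in the leaderless double sum every edge $(i,j)$ with $i,j\ge2$ is counted twice (so $\tfrac18$ becomes an effective $\tfrac14$), whereas the leader has no adaptive gain $c_{1i}$ and each leader edge is counted only once. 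Doubling that weight is exactly what makes the $c_{i1}$-dependent pieces cancel cleanly after the Young step, giving $-\tfrac{\beta}{2}a_{i1}\tilde z_i^T\Gamma\tilde z_i$; this is precisely the coefficient needed to (i) recombine the leader and follower-follower negative terms into $-\tfrac{\beta}{8}z^T(\mathcal L_1\otimes\Gamma)z$ with the correct $2{:}1$ ratio, and (ii) match the $\tfrac12$ in front of the $a_{i1}$ pieces of \dref{eve3}. With your weight you are left with an extra $-\tfrac{c_{i1}}{4}a_{i1}\tilde z_i^T\Gamma\tilde z_i$ and a mismatched leader/follower ratio, so the grouping ``$\tfrac{\alpha}{2}\cdot f_i$'' does not close as stated. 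Adopting the paper's weight repairs this immediately; the rest of your outline then goes through verbatim.
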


\begin{proof}
Consider the Lyapunov function candidate
\begin{equation}\label{lya3}
\begin {aligned}
V_3=\frac{1}{2}\sum_{i=2}^N{z_i^TPz_i}+\sum_{i=2}^N\sum_{j=2,j\neq i}^N{\frac{(c_{ij}-\beta)^2}{8\kappa_{ij}}}+\sum_{i=2}^N{\frac{(c_{i1}-\beta)^2}{4\kappa_{i1}}},
\end{aligned}
\end{equation}
where $\beta$ is a positive constant to be determined later.
Evidently, $V_3$ is positive definite.
The time derivative of $V_3$ along the trajectory of \dref{vid} is given by
\begin{equation}\label{lya3d1}
\begin {aligned}
\dot V_3
&=\sum_{i=2}^N{z_i^TPAz_i}-\sum_{i=2}^N{c_{i1}a_{i1}z_i^T\Gamma \tilde z_i}
-\sum_{i=2}^N{\sum_{j=2}^N{c_{ij}a_{ij}z_i^T\Gamma (\tilde z_i-\tilde z_j)}}\\
&\quad+\sum_{i=2}^N{\frac{c_{i1}-\beta}{2}a_{i1}[-\varrho_{i1}c_{i1}+\tilde z_i^T\Gamma \tilde z_i]}\\
&\quad+\sum_{i=2}^N\sum_{j=2,j\neq i}^N{\frac{c_{ij}-\beta}{4}a_{ij}[-\varrho_{ij}c_{ij}+(\tilde z_i-\tilde z_j)^T\Gamma(\tilde z_i-\tilde z_j)]}.
\end{aligned}
\end{equation}

Note that
\begin{equation}\label{lya3d2}
\begin {aligned}
-\sum_{i=2}^N{c_{i1}a_{i1}z_i^T\Gamma \tilde z_i}
&=-\sum_{i=2}^N{c_{i1}a_{i1}\tilde z_i^T\Gamma \tilde z_i}+\sum_{i=2}^N{c_{i1}a_{i1}e_i^T\Gamma \tilde z_i},
\end{aligned}
\end{equation}
\begin{equation}\label{lya3d3}
\begin {aligned}
\sum_{i=2}^N{c_{i1}a_{i1}e_i^T\Gamma \tilde z_i}
&\leq \frac{1}{2}\sum_{i=2}^N{c_{i1}a_{i1}\tilde z_i^T\Gamma \tilde z_i}+\frac{1}{2}\sum_{i=2}^N{c_{i1}a_{i1}e_i^T\Gamma e_i},
\end{aligned}
\end{equation}
\begin{equation}\label{lya3d4}
\begin {aligned}
&-\sum_{i=2}^N{\sum_{j=2}^N{c_{ij}a_{ij}z_i^T\Gamma (\tilde z_i-\tilde z_j)}}\\
&\qquad=-\frac{1}{2}\sum_{i=2}^N{\sum_{j=2}^N{c_{ij}a_{ij}(\tilde z_i-\tilde z_j)^T\Gamma (\tilde z_i-\tilde z_j)}}\\
&\qquad\quad+\frac{1}{2}\sum_{i=2}^N{\sum_{j=2}^N{c_{ij}a_{ij}(e_i-e_j)^T\Gamma (\tilde z_i-\tilde z_j)}},
\end{aligned}
\end{equation}
and
\begin{equation}\label{lya3d5}
\begin {aligned}
&\sum_{i=2}^N{\sum_{j=2}^N{c_{ij}a_{ij}(e_i-e_j)^T\Gamma (\tilde z_i-\tilde z_j)}}\\
&\qquad\leq \frac{1}{2}\sum_{i=2}^N{\sum_{j=2}^N{c_{ij}a_{ij}(\tilde z_i-\tilde z_j)^T\Gamma (\tilde z_i-\tilde z_j)}}\\
&\qquad\quad+\frac{1}{2}\sum_{i=2}^N{\sum_{j=2}^N{c_{ij}a_{ij}(e_i-e_j)^T\Gamma (e_i-e_j)}}.
\end{aligned}
\end{equation}
Substituting \dref{vid}, \dref{lya3d2}, \dref{lya3d3}, \dref{lya3d4}, and \dref{lya3d5} into \dref{lya3d1} yields
\begin{equation}\label{lya3d6}
\begin {aligned}
\dot V_3
&\leq\sum_{i=2}^N{z_i^TPAz_i}-\frac{\beta}{2}\sum_{i=2}^N{a_{i1}\tilde z_i^T\Gamma \tilde z_i}+\frac{1}{2}\sum_{i=2}^N{c_{i1}a_{i1}e_i^T\Gamma e_i}\\
&\quad-\frac{\beta}{4}\sum_{i=2}^N{\sum_{j=2}^N{a_{ij}(\tilde z_i-\tilde z_j)^T\Gamma (\tilde z_i-\tilde z_j)}}\\
&\quad+\frac{1}{4}\sum_{i=2}^N{\sum_{j=2}^N{c_{ij}a_{ij}(e_i-e_j)^T\Gamma (e_i-e_j)}}
+\varsigma',
\end{aligned}
\end{equation}
where
$\varsigma'=\sum_{i=2}^N\frac{\varrho_{i1}a_{i1}}{4}\beta^2
+\sum_{i=2}^N\sum_{j=2}^N\frac{\varrho_{ij}a_{ij}}{8}\beta^2$.
Let $z=[z_2^T,\cdots,z_N^T]^T$.
Similarly as did in deriving \dref{lya1d2''}, we can obtain from \dref{lya3d6} that
\begin{equation*}\label{lya3d7}
\begin {aligned}
\dot V_3
&\leq \frac{1}{2}z^T\left[I_{N-1}\otimes (PA+A^TP)-\frac{\beta}{4}\mathcal L_1\otimes \Gamma\right]z\\
&\quad+\frac{\beta}{2}\sum_{i=2}^N\{\frac{1}{2}(1+\frac{2}{\delta\beta}\cdot \delta c_{i1})a_{i1}e_i^T\Gamma e_i\\
&\quad
+\sum_{j=2}^N(1+\frac{2}{\delta\beta}\cdot \delta c_{ij})a_{ij}e_i^T\Gamma e_i-\frac{1}{2}a_{i1}(\tilde x_i-\tilde x_1)^T\Gamma (\tilde x_i-\tilde x_1)\\
&\quad-\frac{1}{4}\sum_{j=2}^Na_{ij}(\tilde x_i-\tilde x_j)^T\Gamma (\tilde x_i-\tilde x_j)\}+\varsigma'.
\end{aligned}
\end{equation*}
By noting the triggering functions \dref{eve3} and choosing $\beta$ to be sufficiently large such that $\beta\geq \max\{\frac{2}{\delta},\frac{4}{\lambda_{\min}(\mathcal L_1)}\}$, we have
\begin{equation*}\label{lya3d8}
\begin {aligned}
\dot V_3
&\leq -\frac{1}{2}\lambda_{\min}(Q)z^Tz+\varsigma'+\frac{\beta}{2}(N-1)\mu e^{-\nu t}.
\end{aligned}
\end{equation*}

The rest of the proof can be shown similarly as in Theorems \ref{theorem_1} and \ref{theorem_1+}. The details are omitted
here for brevity.
$\hfill $$\blacksquare$
\end{proof}

\section{Simulation Examples}\label{s-simulation}
In this section,
we illustrate the effectiveness of the theoretical results by doing numerical simulations.
For illustration, consider the linear multi-agent  systems described
by \dref{sys}, with $A=\left[\begin{smallmatrix}0&1&0\\0&0&1\\0&0&0\end{smallmatrix}\right]$ and $B=\left[\begin{smallmatrix}0\\0\\1\end{smallmatrix}\right].$
All initial values of the agents are randomly chosen.
The communication graph among the agents is depicted in Fig. \ref{fig},
which evidently satisfies Assumption \ref{ABC}.
\begin{figure}[!htb]
\centering
\includegraphics[width=0.18\textheight,height=0.2\textwidth]{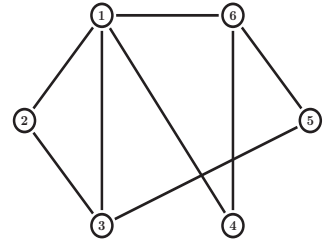}
\caption{The undirected communication graph among agents.}
\label{fig}
\end{figure}
\begin{figure}
  \centering
  \subfigure[Fig. 2a]{
    \label{fig:subfig:a} 
    \includegraphics[width=0.15\textheight,height=0.15\textwidth]{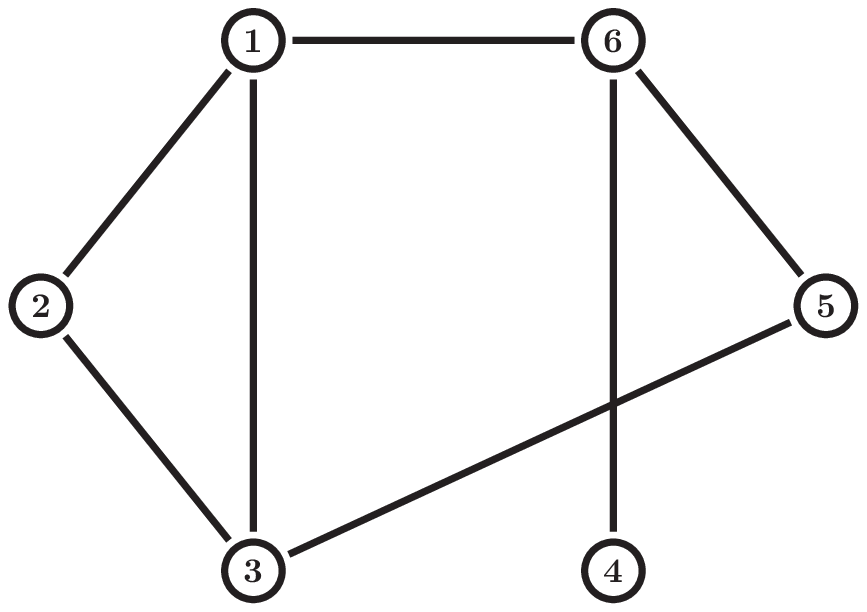}}
\quad
  \subfigure[Fig. 2b]{
    \label{fig:subfig:b} 
    \includegraphics[width=0.145\textheight,height=0.145\textwidth]{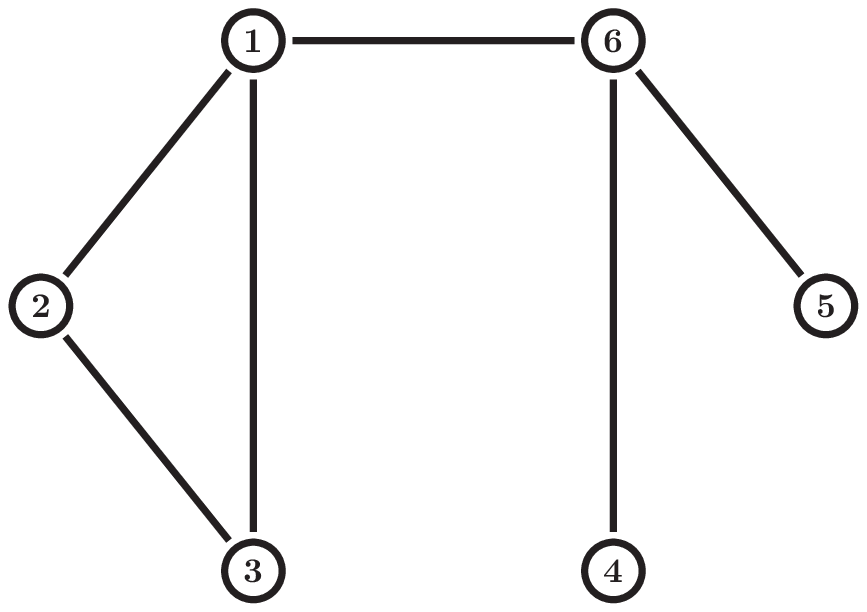}}
  \caption{Two other undirected communication graphs.}
  \label{fig:subfig} 
\end{figure}

Solving the ARE \dref{lyaine} gives a solution:
$P=\left[\begin{smallmatrix}2.4142&2.4142&1.0000\\2.4142&4.8284&2.4142\\1.0000&2.4142&2.4142\end{smallmatrix}\right].$
Thus, the feedback matrices in the event-based control law \dref{pro1} can be obtained as
$K=\left[\begin{smallmatrix}-1.0000&-2.4142&-2.4142\end{smallmatrix}\right],$
$\Gamma=\left[\begin{smallmatrix}1.0000&2.4142&2.4142\\2.4142&5.8284&5.8284\\2.4142&5.8284&5.8284\end{smallmatrix}\right].$
Other parameters of \dref{pro1} and \dref{eve} are chosen as $\delta=1$, $\mu=2$, $\nu=0.5$, $\varrho_{ij}=0$, and $\kappa_{ij}=0.2$, $\forall (v_i,v_j)\in \mathcal E$.
The consensus errors $x_i-x_1$, $i=2,\cdots,N$, are depicted in Fig. \ref{fig-consensuserror}, from which we can observe that consensus is indeed achieved.
The adaptive coupling weights $c_{ij}(t)$ in \dref{pro1} are shown in Fig. \ref{c_ij}, which implies that $c_{ij}(t)$ converge to finite steady-state values.
The triggering instants of each agent are presented in Fig. \ref{fig-triggering}, which shows that the Zeno behavior is excluded.
To study how the network topology affects triggering time instants, we do simulations from 0s to 20s, respectively, under three different topologies, i.e., these in Fig. 1, Fig. 2a, and Fig. 2b.

\begin{figure}[!htb]
\centering
\includegraphics[width=0.3\textheight,height=0.25\textwidth]{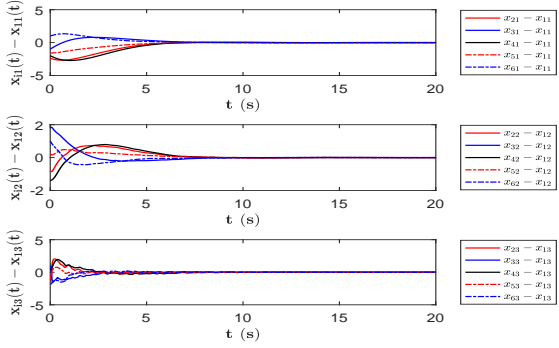}
\caption{The consensus errors $x_i-x_1$, $i=2,\cdots,6$.}
\label{fig-consensuserror}
\end{figure}
\begin{figure}[!htb]
\centering
\includegraphics[width=0.3\textheight,height=0.2\textwidth]{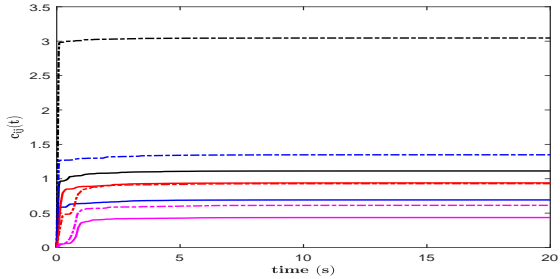}
\caption{The adaptive coupling weights $c_{ij}(t)$ in \dref{pro1}.}
\label{c_ij}
\end{figure}
\begin{figure}[!htb]
\centering
\includegraphics[width=0.3\textheight,height=0.2\textwidth]{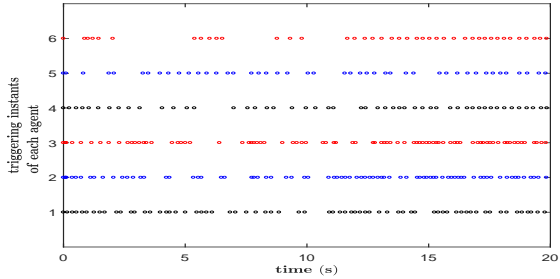}
\caption{Triggering instants of each agent.}
\label{fig-triggering}
\end{figure}

\section{Conclusion}\label{s_con}
In this paper,
we have designed distributed adaptive event-based protocols to solve both the leaderless and leader-follower consensus problem for general linear multi-agent networks.
Compared to the previous related works,
our main contribution is that we have proposed for the first time in the literature fully distributed and scalable consensus protocols, which do not rely on any global information of the network graph and are independent of the network's scale.
Our event-triggered protocols do not need continuous communications among
neighboring agents and do not exhibit the Zeno behavior.
Extending the results to general directed graphs is an interesting work for future study.


\begin{thebibliography}{10}

\bibitem{olfatisaber2007consensus}
R.~Olfati-Saber, J.~Fax, and R.~Murray, ``Consensus and cooperation in
  networked multi-agent systems,'' {\em Proc. of the IEEE}, vol.~95, no.~1,
  pp.~215--233, 2007.

\bibitem{WRen2007information}
W.~Ren, R.~Beard, and E.~Atkins, ``Information consensus in multivehicle
  cooperative control,'' {\em IEEE Control Syst. Mag.}, vol.~27, no.~2,
  pp.~71--82, 2007.

\bibitem{ZLi2015designing}
Z.~Li, G.~Wen, Z.~Duan, and W.~Ren, ``Designing fully distributed consensus
  protocols for linear multi-agent systems with directed graphs,'' {\em IEEE
  Trans. Autom. Control}, vol.~60, no.~4, pp.~1152--1157, 2015.

\bibitem{ZLi2014cooperative}
Z.~Li and Z.~Duan, {\em Cooperative Control of Multi-agent Systems: A Consensus
  Region Approach}.
\newblock Boca Raton, FL: CRC Press, 2014.

\bibitem{lewis2014cooperative}
F.~Lewis, H.~Zhang, K.~Hengster-Movric, and A.~Das, {\em Cooperative Control of
  Multi-Agent Systems: Optimal and Adaptive Design Approaches}.
\newblock London: Springer-Verlag, 2014.

\bibitem{Henningsson2008sporadic}
T.~Henningsson, E.~Johannesson, and A.~Cervin, ``Sporadic event-based control
  of first-order linear stochastic systems,'' {\em Automatica}, vol.~44,
  no.~11, pp.~2890--2895, 2008.

\bibitem{Heemels2012an}
W.~Heemels, K.~Johansson, and P.~Tabuada, ``An introduction to event-triggered
  and self-triggered control,'' {\em 51st IEEE Conf. Decision and Control},
  pp.~3270--3285, 2012.

\bibitem{Astrom1999comparison}
K.~{\AA}str{\"{o}}m and B.~Bernhardsson, ``Comparison of periodic and event
  based sampling for first-order stochastic systems,'' {\em Proc. IFAC World
  Conf.}, pp.~301--306, 1999.

\bibitem{PTabuada2007event}
P.~Tabuada, ``Event-triggered real-time scheduling of stabilizing control
  tasks,'' {\em IEEE Trans. Autom. Control}, vol.~52, no.~9, pp.~1680--1685,
  2007.

\bibitem{Heemels2008analysis}
W.~Heemels, J.~Sandee, and P.~V.~D. Bosch, ``Analysis of event-driven
  controllers for linear systems,'' {\em Int. J. Control}, vol.~81,
  pp.~571--590, 2008.

\bibitem{DVDimarogonas2012distributed}
D.~Dimarogonas, E.~Frazzoli, and K.~Johansson, ``Distributed event-triggered
  control for multi-agent systems,'' {\em IEEE Trans. Autom. Control}, vol.~57,
  no.~5, pp.~1291--1297, 2012.

\bibitem{Garcia2013decentralised}
E.~Garcia, Y.~Cao, H.~Yu, P.~Antsaklis, and D.~Casbeer, ``Decentralised
  event-triggered cooperative control with limited communication,'' {\em Int.
  J. Control}, vol.~86, no.~9, pp.~1479--1488, 2013.

\bibitem{Meng2013event}
X.~Meng and T.~Chen, ``Event based agreement protocols for multi-agent
  networks,'' {\em Automatica}, vol.~49, no.~7, pp.~2125--2132, 2013.

\bibitem{Seyboth2013event}
G.~Seyboth, D.~Dimarogonas, and K.~Johanasson, ``Event-based broadcasting for
  multi-agent average consensus,'' {\em Automatica}, vol.~49, no.~1,
  pp.~245--252, 2013.

\bibitem{DYang2016Decentralized}
D.~Yang, W.~Ren, X.~Liu, and W.~Chen, ``Decentralized event-triggered consensus
  for linear multi-agent systems under general directed graphs,'' {\em
  Automatica}, vol.~69, pp.~242--269, 2016.

\bibitem{HZhang2014observer}
H.~Zhang, G.~Feng, H.~Yan, and Q.~Chen, ``Observer-based output feedback
  event-triggered control for consensus of multi-agent systems,'' {\em IEEE
  Trans. Ind. Electron.}, vol.~61, no.~9, pp.~4885--4894, 2014.

\bibitem{Zhu2014event}
W.~Zhu, Z.~Jiang, and G.~Feng, ``Event-based consensus of multi-agent systems
  with general linear models,'' {\em Automatica}, vol.~50, no.~2, pp.~552--558,
  2014.

\bibitem{Guo2014a}
G.~Guo, L.~Ding, and Q.~Han, ``A distributed event-triggered transmission
  strategy for sampled-data consensus of multi-agent systems,'' {\em
  Automatica}, vol.~50, no.~5, pp.~1489--1496, 2014.

\bibitem{Cheng2016Event}
Y.~Cheng and V.~Ugrinovskii, ``Event-triggered leader-following tracking
  control for multivariable multi-agent systems,'' {\em Automatica}, vol.~70,
  pp.~204--210, 2016.

\bibitem{Garcia2014decentralized}
E.~Garcia, Y.~Cao, and D.~W. Casbeer, ``Decentralized event-triggered consensus
  with general linear dynamics,'' {\em Automatica}, vol.~50, pp.~2633--2640,
  2014.

\bibitem{TCheng2017event}
T.~Cheng, Z.~Kan, J.~R. Klotz, J.~M. Shea, and W.~E. Dixon, ``Event-triggered
  control of multiagent systems for fixed and time-varying network
  topologies,'' {\em IEEE Trans. Autom. Control}, vol.~62, no.~10,
  pp.~5365--5371, 2017.

\bibitem{yang2010decentralized}
P.~Yang, R.~A. Freeman, G.~J. Gordon, K.~M. Lynch, S.~S. Srinivasa, and
  R.~Sukthankar, ``Decentralized estimation and control of graph connectivity
  for mobile sensor networks,'' {\em Automatica}, vol.~46, no.~2, pp.~390--396,
  2010.

\bibitem{franceschelli2013decentralized}
M.~Franceschelli, A.~Gasparri, A.~Giua, and C.~Seatzu, ``Decentralized
  estimation of laplacian eigenvalues in multi-agent systems,'' {\em
  Automatica}, vol.~49, no.~4, pp.~1031--1036, 2013.

\bibitem{Nowzari17Event}
C.~Nowzari, E.~Garcia, and J.~Cort\'es, ``Event-triggered communication and
  control of network systems for multi-agent consensus,'' {\em arXiv preprint
  arXiv:1712.00429}, 2017.

\bibitem{DBernstein2009Matrix}
D.~Bernstein, {\em Matrix Mathematics: Theory, Facts, and Formulas}.
\newblock Princeton, NJ: Princeton University Press, 2009.

\bibitem{ioannou1996robust}
P.~A. Ioannou and J.~Sun, {\em Robust Adaptive Control}.
\newblock New York, NY: Prentice-Hall, Inc., 1996.

\bibitem{suninpress}
Z.~Sun, N.~Huang, B.~D.~O. Anderson, and Z.~Duan, ``Event-based multi-agent
  consensus control: Zeno-free triggering via {$\mathcal {L}_p$} signals,''
{\em IEEE Trans. Cybernectics}, submitted, 2018.

\end{thebibliography}
\end{document}